\tikzset{input/.style={}}
\newcommand\numberthis{\refstepcounter{equation}\tag{\theequation}}
\newcommand{\ida}{\stackrel{{\sf def}}=}
\newcommand{\set}[1]{\mbox{$\{#1\}$}}
\newcommand{\pair}[1]{\mbox{$\langle #1\rangle$}}
\newcommand{\union}{\cup}
\newcommand{\intersection}{\cap}
\newcommand{\id}[1]{\mbox{\it #1}}
\newcommand{\N}{\mathbb{N_0}}
\newcommand{\Bool}{\mathbb{B}}
\newcommand{\Rp}{{\mathbb{R}}^{>0}}
\newcommand{\Rnn}{{\mathbb{R}}^{\geq 0}}
\theoremstyle{definition}
\newtheorem{notation}{Notation}[section]
\newtheorem{theorem}{Theorem}[section]
\newtheorem{lemma}{Lemma}[section]
\newtheorem{corollary}{Corollary}[section]
\newtheorem{example}{Example}[section]
\newcommand{\be}{\begin{enumerate}}
\newcommand{\ee}{\end{enumerate}}
\newcommand{\bdes}{\begin{description}}
\newcommand{\edes}{\end{description}}
\newcommand{\bt}{\begin{theorem}}
\newcommand{\et}{\end{theorem}}
\newcommand{\bl}{\begin{lemma}}
\newcommand{\el}{\end{lemma}}
\newtheorem{prop}[theorem]{Proposition}
\newcommand{\bp}{\begin{prop}}
\newcommand{\ep}{\end{prop}}
\newtheorem{defn}[theorem]{Definition}
\newcommand{\bd}{\begin{defn}}
\newcommand{\ed}{\end{defn}}
\newtheorem{remarks}[theorem]{Remarks}
\newcommand{\brem}{\begin{remark}}
\newcommand{\erem}{\end{remark}}
\newcommand{\brems}{\begin{remarks}}
\newcommand{\erems}{\end{remarks}}
\newcommand{\goesto}[2]{\stackrel[#1]{#2}{\longrightarrow}}
\newcommand{\lit}[1]{{\sf{#1}}}
\renewcommand{\t}{\lit{t}}
\newcommand{\h}{\lit{h}}
\newcommand{\e}{\lit{e}}
\newcommand{\tp}{\lit{top}}
\newcommand{\pass}{\lit{pass}}
\newcommand{\override}{!}
\newcommand{\overzero}{\lit{!0}}
\newcommand{\overone}{\lit{!1}}
\newcommand{\true}{\lit{true}}
\newcommand{\false}{\lit{false}}
\newcommand{\Bbot}{B_{\bot}}
\newcommand{\bbot}{b_{\bot}}
\newcommand{\Act}{\id{Act}}
\newcommand{\Cmd}{\id{Cmd}}
\newcommand{\oa}{\overline{a}}
\newcommand{\ob}{\overline{b}}
\newcommand{\oc}{\overline{c}}
\newcommand{\om}{\overline{m}}
\newcommand{\ot}{\overline{\t}}
\newcommand{\opass}{\overline{\pass}}
\newcommand{\obb}{\overline{\bbot}}
\newcommand{\obot}{\overline{\bot}}
\newcommand{\B}{{\cal{B}}}
\newcommand{\I}{{\cal{I}}}
\newcommand{\IR}{\I_{R}}
\renewcommand\numberthis{\refstepcounter{equation}\tag{\theequation}}
\title{Generalised Dining Philosophers as Feedback Control}
\author{\Author{Venkatesh Choppella}\\ 
    \Address{IIIT Hyderabad, India}\\
     \Email{venkatesh.choppella@iiit.ac.in}\\
    \and
       \Author{Kasturi Viswanath}\\
        \Address{IIIT Hyderabad, India}
        \Email{viswanath.iiithyd@gmail.com}
        \Author{Arjun Sanjeev}\\
        \Address{IIIT Hyderabad, India}
        \Email{arjun.sanjeev@research.iiit.ac.in}
      }
\author{Venkatesh Choppella, Kasturi Viswanath and Arjun Sanjeev}
\date{}
\begin{document}

\maketitle
\tableofcontents

\begin{abstract}
We revisit the Generalised Dining Philosophers problem
through the perspective of feedback control.  The result is
a modular development of the solution using the notions of
system and system composition (the latter due to Tabuada) in
a formal setting that employs simple equational reasoning.
The modular approach separates the solution architecture
from the algorithmic minutiae and has the benefit of
simplifying the design and correctness proofs.

Three variants of the problem are considered: N=1, and N > 1
with centralised and distributed topology.  The base case
(N=1) reveals important insights into the problem
specification and the architecture of the solution.  In each
case, solving the Generalised Dining Philosophers reduces to designing
an appropriate feedback controller.
\end{abstract}

\section{Introduction: The Dining Philosophers problem}
\label{sec:intro}

Resource sharing amongst concurrent, distributed processes
is at the heart of many computer science problems, specially
in operating, distributed embedded systems and networks.
Correct sharing of resources amongst processes must not only
ensure that a single, non-sharable resource is guaranteed to
be available to only one process at a time (safety), but
also starvation-freedom -- a process waiting for a resource
should not have to wait forever.  Other complexity metrics
of interest in a solution are average or worst case waiting
time, throughput, etc.  Distributed settings introduce other
concerns: synchronization, faults, etc.

The Dining Philosophers problem, originally formulated by
Edsger Dijkstra in 1965 and subsequently published in
1971\cite{dijkstra-1971} is a celebrated thought experiment
in concurrency control: Five philosophers are seated around
a table.  Adjacent philosophers share a fork.  A philosopher
may either eat or think, but can also get hungry in which
case he/she needs the two forks on either side to eat.
Clearly, this means adjacent philosophers do not eat
simultaneously, the \emph{exclusion} condition.  Each
philosopher denotes a process running continuously and
forever that is either waiting (\emph{hungry}) for a
resource (like a file to write to), or using that resource
(\emph{eating}), or having relinquished the resource
(\emph{thinking}).  The problem consists of designing a
protocol by which no philosopher remains hungry
indefinitely, the \emph{starvation-freeness} condition,
assuming each eating session lasts only for a finite
time\footnote{Other significant works on the Dining
  Philosophers problem
  \cite{Chandy:1984:DPP:1780.1804,chandy-misra-book-1988}
  call this the `fairness' condition.  We avoid this
  terminology since in automata theory `fairness' has a
  different connotation.}.  In addition, the \emph{progress}
condition means that there should be no deadlock: at any
given time, at least one philosopher that is hungry should
move to eating after a bounded period of time.  Note that
starvation-freedom implies progress.

The generalisation of the Dining Philosophers involves N
philosophers with an arbitrary, non-reflexive neighbourhood.
Neighbours can not be eating simultaneously.  The
generalised problem was suggested and solved by Dijsktra
himself\cite{dijkstra-ewd625}.  The Generalised Dining
Philosophers problem is discussed at length in Chandy and
Misra's book\cite{chandy-misra-book-1988}.  The Dining
Philosophers problem and its generalisation have spawned
several variants and numerous solutions throughout its long
history and is now staple material in many operating system
textbooks.

\textbf{Individual Philosopher dynamics:} Consider a
single philosopher who may be in one of three states:
thinking, hungry and eating.  At each step, the
philosopher may \emph{choose} to either continue to be in
that state, or switch to the next state (from thinking to
hungry, from hungry to eating, or from eating to thinking
again).  The dynamics of the single philosopher is shown
in \cref{fig:dp-states}.   Note that the philosophers run
forever.

\usetikzlibrary{arrows,automata}

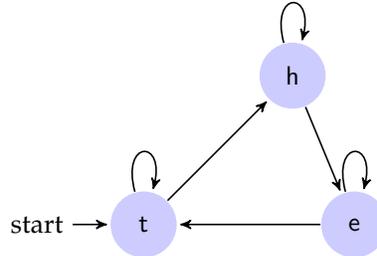
\begin{figure}
\caption{Philosopher states and transitions\label{fig:dp-states}}
\centering{
\begin{tikzpicture}[->,>=stealth',shorten >=1pt,auto,node distance=2.8cm,
                    semithick]
  \tikzstyle{every state}=[fill=blue!20,draw=none,text=black]

  \node[initial,state] (T)                    {$\t$};
  \node[state]         (H) [above right of=T] {$\h$};
  \node[state]         (E) [right of=T]       {$\e$};

  \path (T) edge [loop above] node {} (T)
            edge              node {} (H)
        (H) edge [loop above] node {} (H)
            edge              node {} (E)
        (E) edge [loop above] node {} (E)
            edge              node {} (T);
\end{tikzpicture}
}
%\end{wrapfigure}
\end{figure}
%\end{document}

We now consider the Generalised Dining Philosophers
problem.
\begin{defn}[\textbf{Generalised Dining Philosophers problem}]
\label{defn:gdp}

N philosophers are arranged in a connected \emph{conflict}
graph $G=\pair{V,E}$ where $V$ is a set of $N=|V|$
philosophers and $E$ is an irreflexive adjacency relation
between them.

   If each of the N philosophers was to continue to evolve
   according to the dynamics in \cref{fig:dp-states}, two
   philosophers sharing an edge in $E$ could be eating
   together, violating \emph{safety}.   The Generalised Dining
   Philosophers problem is the following:
 
  \textbf{Problem}: Assuming that no philosopher eats
  forever in a single stretch, construct a protocol that
  ensures

   \begin{enumerate} 

    \item \textbf{Safety:} No two adjacent philosophers eat
      at the same time.   

    \item \textbf{Starvation-freedom:} A philosopher that is hungry
      eventually gets to eat.

    \item \textbf{Non-intrusiveness:} A philosopher that is
      thinking or eating continues to behave according to
      the dynamics of \cref{fig:dp-states}.
   \end{enumerate}
\end{defn}

A ``protocol'' is usually interpreted as an algorithm or a
computer program that runs as part of the process.  The
protocol defines the interaction between the actors
mentioned (philosophers in the generalised problem, and in
the 5-diners problem, the forks as well) in the problem that
may be needed to solve the problem.

The main actors in this problem are the philosophers.  The
dynamics of the philosophers' transitions as shown in
\cref{fig:dp-states} are governed by their \emph{choice} to
either remain in the same state or move to a new state.
This choice manifests as non-determinism in the dynamics.
The first observation is that the philosopher dynamics as
shown in \cref{fig:dp-states} is inadequate to ensure
safety.  As noted above, nothing prevents two adjacent and
hungry philosophers to both move to eating.

One way of solving the Generalised Dining Philosophers problem is to
define a more complex dynamics that each of the N
philosophers implements so that the safety and starvation
freedom conditions hold.  Yet another way, that hints at the
control approach, is to consider additional actors that
restrain the philosophers' original actions in some specific and
well-defined way so as to achieve safety and starvation
freedom.  The additional actors needed to restrict the
philosophers' actions are called \emph{controllers}.  The
role of the controller is to issue \emph{commands} that may
involve overriding the philosopher's own choice to move to a
new state.  For example, a hungry philosopher who wishes to
eat in the next cycle may find his/her wish overridden by a
command issued by the controller to continue to remain
hungry in the interest of preserving the safety invariant.
However, in any solution to the problem, the controller
should eventually promote a hungry philosopher to eating so
as to preserve the starvation freedom invariant.  It is this
approach that we wish to explore in this paper.

Any control on the philosophers should be not overly
restrictive: a philosopher who is either thinking or eating
should be allowed to exercise his/her choice about what to
do next; only a hungry philosopher may be commanded by the
controller either to continue to remain hungry or switch to
eating, overriding the philosopher's own choice of whether
to stay hungry or switch to eating\footnote{Sometimes, we
  may want to relax this condition: a \emph{preemptive}
  controller may force an eating philosopher back to a
  hungry state if the philosopher eats for too long.
  Preemptive controller design is not discussed, but may be
  implemented using the same ideas as discussed in this
  paper.}.

Solutions to the Generalised Dining Philosophers may be
broadly classified as either centralised or distributed.
The centralised approach assumes a central controller that
commands the philosophers on what to do next.  The
distributed approach assumes no such centralised authority;
the philosophers are allowed to communicate to arrive at a
consensus on what each can do next.

The objective of this paper is to formulate the Generalised
Dining Philosophers problem using the idea of control,
particularly that which involves feedback.  Feedback
control, also called supervisory control, is the foundation
of much of engineering science and is routinely employed in
the design of embedded systems.  However, its value as a
software architectural design principle is only
insufficiently captured by the popular
``model-view-controller'' (MVC) design
pattern\cite{Gamma-et-al-1994}, usually found in the design
of user and web interfaces.  For example, MVC controllers
implement open loop instead of feedback control.

The starting point is a more precise statement of the
problem by employing the formalism of discrete state
transition systems with output, also called Moore machines.
We then borrow the notion of system composition due to
Tabuada\cite{tabuada-book-2009}.  Composition is defined
with respect to an interconnect that relates the states and
inputs of the two systems being composed.  Viewed from this
perspective, the Generalised Dining Philosophers form a system
consisting of interconnected subsystems.  A special case of
the interconnect which relates inputs and outputs yields
modular composition and allows the Generalised Dining Philosophers to be
treated as an instance of feedback control.  The solution
then reduces to designing two types of components - the
philosophers and the controllers - and their interconnections
(the system architecture), followed by definitions of the
transition functions of the philosophers and the
controllers.  The transition function of the controller is
called a \emph{control law}.

The compositional approach encourages us to think of the
system in a modular way, emphasising the interfaces between
components and their interconnections.  One benefit of this
approach is that it allows us to define multiple types of
controllers (N=1, N>1 centralised and N>1 local) that
interface with a fixed philosopher system.  The modularity
in architecture also leads to modular correctness proofs of
safety and starvation freedom.  For example, the proof of
the distributed case is reduced to showing that the
centralised controller state is reconstructed by the union
of the states of the distributed local controllers.  That
said, however, subtle issues arise even in the simplest
variants of the problem.  These have to do with
non-determinism, timing and feedback, but equally, from
trying to seek a precise definition of the problem
itself\footnote{``In the design of reactive systems it is
  sometimes not clear what is given and what the designer is
  expected to produce.''  Chandy and
  Misra\cite[p.~290]{chandy-misra-book-1988}.}.

\paragraph{Paper roadmap}
The rest of the paper is an account on how to solve the
Generalised Dining Philosophers problem in a step-by-step manner,
varying both the complexity of the problem from N=1 to N>1
and from centralised to distributed.  We begin with a short
review of the fundamental concept of systems, their
behaviour and composition (\cref{sec:systems}) and the role
of time.  We then turn our attention to the simplest variant
of the Generalised Dining Philosophers: the 1 Diner problem
(\cref{sec:1dp}) and explore a series of architectures for
the single philosopher, the controller and their
composition.  The architecture identifies the boundaries and
interfaces of each subsystem and the ``wiring'' of the
subsystems, in this case, the one philosopher and the
controller, with each other.  Next we consider N Diners
(\cref{sec:ndp}), where the problem reduces to designing a
controller and (a) defining a data structure internal to the
controller's state and an algorithm to manipulate it and,
(b) computing the set of control inputs.  After proving the
correctness of the centralised solution, we consider
distributed control (\cref{sec:dist}).  The problem here
reduces to distributing the effort of the centralised
controller to N different local controllers, each
controlling the behaviour of its corresponding philosopher.
A clear interconnect boundary between each component defines
exactly which part of the state is shared between the
components.  We compare the feedback control based solution
with other approaches (\cref{sec:related}) and conclude with
some pointers to future work (\cref{sec:conc}).

No prior background in control theory is assumed; relevant
concepts from control systems are explained in the next
section.

\section{Systems Approach}
\label{sec:systems}
The main idea in control theory is that of a \emph{system}.
Systems have \emph{state} and exhibit behaviour governed by
a \emph{dynamics}.  A system's state undergoes change due to
input.  Dynamics is the unfolding of state over time,
governed by laws that relate input with the state.  A
system's dynamics is thus expressed as a relation between
the current state, the current input and the next state.
Its output is a function of the state.  Thus inputs and
outputs are connected via state.  The system's state is
usually considered hidden, and is inaccessible directly.
The observable behaviour of a system is available only via
its output.   A schematic diagram of a system is shown in
\Cref{fig:auto-system}.

% adapted from
% http://www.texample.net/media/tikz/examples/TEX/control-system-principles.tex
\usetikzlibrary{shapes,arrows}

\begin{comment}
:Title: Simple system
:Tags: Block diagrams 

A simple system with input and output.
\end{comment}

\tikzstyle{block} = [draw, fill=blue!20, rectangle, 
    minimum height=3em, minimum width=6em]
\tikzstyle{sum} = [draw, fill=blue!20, circle, node distance=1cm]
\tikzstyle{input} = [coordinate]
\tikzstyle{output} = [coordinate]
\tikzstyle{pinstyle} = [pin edge={to-,thin,black}]

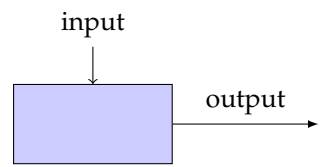
\begin{wrapfigure}{r}{6cm}
\caption{System with input and output\label{fig:auto-system}}
\centering{
% The block diagram code is probably more verbose than necessary
\begin{tikzpicture}[auto, node distance=2cm,>=latex]
    % We start by placing the blocks
%    \node [input, name=input] {};
    %% \node [sum, right of=input] (sum) {};
    %% \node [block, right of=sum] (controller) {Controller};
    \node [block,%
           %right of=input, 
           pin={[pinstyle]above:input},
            node distance=3cm] (system) {};
    % We draw an edge between the controller and system block to 
    % calculate the coordinate u. We need it to place the measurement block. 

%    \draw [->] (controller) -- node[name=u] {$u$} (system);
    \node [output, right of=system, node distance=3cm]  (output) {};
    \draw [->] (system) -- node[name=o] {output} (output);
%    \node [block, below of=u] (measurements) {Measurements};

    % Once the nodes are placed, connecting them is easy. 
%    \draw [draw,->] (input) -- node {input} (system);
%    \draw [->] (sum) -- node {$e$} (controller);
%    \draw [->] (system) -- node {output} (output);
 %   \draw [->] (y) |- (measurements);
%    \draw [->] (measurements) -| node[pos=0.99] {$-$} 
%        node [near end] {$y_m$} (sum);
\end{tikzpicture}
} %centering
\end{wrapfigure}

In control systems, we are given a system, often identified
as the \emph{plant}.  The plant is also referred to as
\emph{model} in the literature and we shall use the two
terms interchangeably.  The plant exhibits a certain
observable \emph{behaviour}.  The behaviour may be
informally described as an infinite sequence of input-output
pairs.  In addition to the plant, we are also given a
\emph{target behaviour} that is usually a restriction of the
plant's behaviour.

There are many ways of realising the target behaviour.  The
first is to attach another system, an \emph{output filter},
that takes the output of the plant and suitably modifies it,
so that the resulting output now conforms to the behaviour
specified in the problem.  A second way is to have a system,
\emph{input filter}, that intercepts the inputs to the
plant, suitably modifies (or restricts) them and then feeds
the results to the plant.  In both cases, the architecture
of the plant is left untouched.  The dynamics of the plant
too remains unaffected.  Restricting the input is, however,
not always possible.

There is a third way to influence the plant to achieve the
specified behaviour, which is usually what is referred to as
\emph{control}.  The control problem is, very roughly, the
following: \emph{what additional} input(s) should be
supplied to the plant, such that the resulting dynamics as
determined by a new relation between states and inputs now
exhibits output behaviour that is either equal or
approximately equal to the target behaviour specified in the
problem?  The additional input is usually called the
\emph{forced} or \emph{control input}.  Notice that the
additional inputs may require altering the interface and the
dynamics of the plant.  The plant's altered dynamics need to
take into account the combined effect of the original input and the
control input.

The second design question is \emph{how} should the control
input be computed.  Often the control input is computed as a
function of the output of the plant (now extended with the
control input).  Thus we have another system, the
\emph{controller}, (one of) whose inputs is the output of
the plant and whose output is (one of) the inputs to the
plant.  This architecture is called \emph{feedback control}.
The relation between the controller's input and its state
and output is called a \emph{control law}. 

%% \begin{comment}

\Cref{fig:feedback-system} is a schematic diagram
representing a system with feedback control.

% adapted from
% http://www.texample.net/media/tikz/examples/TEX/control-system-principles.tex
\usetikzlibrary{shapes,arrows}

\begin{comment}
:Title: Control system principles
:Tags: Block diagrams 

An example of a control system with a feedback loop. Block diagrams like this
are quite time consuming to create by hand. The relative node placement feature
makes it a bit easier, but it works best when the nodes have equal widths. 
However, the results are quite pleasing and hopefully worth the effort. 
You can probably speed up the process by creating custom macros. 
\end{comment}

\tikzstyle{block} = [draw, fill=blue!20, rectangle, 
    minimum height=3em, minimum width=6em]
\tikzstyle{sum} = [draw, fill=blue!20, circle, node distance=1cm]
\tikzstyle{input} = [coordinate]
\tikzstyle{output} = [coordinate]
\tikzstyle{point} = [coordinate]
\tikzstyle{pinstyle} = [pin edge={to-,thick,black}]

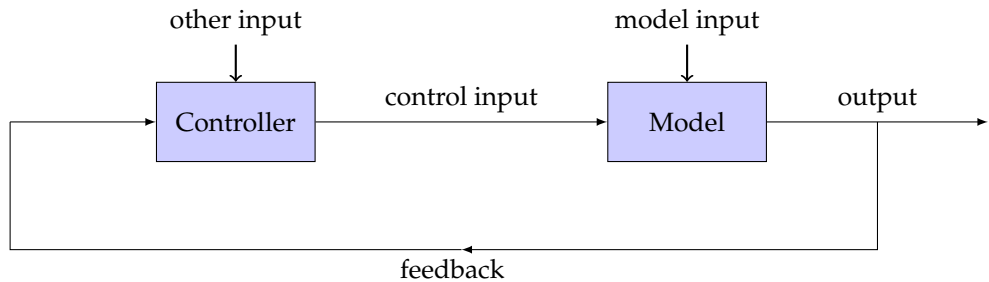
\begin{figure}%[r][6cm]
\caption{Feedback Control System\label{fig:feedback-system}}
\centering{
% The block diagram code is probably more verbose than necessary
\begin{tikzpicture}[auto, node distance=2cm,>=latex]
    % We start by placing the blocks
    \node [input, name=input] {};
    \node [block, right of=input, node distance=3cm, pin={[pinstyle]above:other input}] (controller) {Controller};
    \node [block, right of=controller, pin={[pinstyle]above:model input},
            node distance=6cm] (model) {Model};
    % We draw an edge between the controller and system block to 
    % calculate the coordinate u. We need it to place the measurement block. 
    \draw [->] (controller) -- node[name=u] {control input} (model);
    \node [output, right of=model, node distance=4cm] (output) {};
    \node [point, below of=u] (feedback) {feedback};

    % Once the nodes are placed, connecting them is easy. 
    \draw [draw,->] (input) -- node {} (controller);
    \draw [->] (model) -- node [name=y] {output}(output);
    \draw [->] (y) |- node[name=f] {} (feedback);
    \draw [-] (feedback) -| node[pos=0.01] {feedback}  node [near end] {} (input);
\end{tikzpicture}
} %centering
\end{figure}

The principle of feedback control is well studied and is
used extensively in building large-scale engineering systems
of wide variety.  A modern introduction to the subject is
the textbook by {\AA}str{\"o}m and
Murray\cite{astrom-murray-book-2008} which motivates the
subject by illustrating the use of feedback control in
various engineering and scientific domains: electrical,
mechanical, chemical, and biological and also computing.
%% \end{comment}

%\end{comment}

%\section{Systems Approach}
\label{subsec:formal-notion-system}

In the rest of this section, we present the formal notion of a
system and system composition as defined by
Tabuada\cite{tabuada-book-2009}.

\subsection{Specification, instance and behaviour}
\label{subsec:system-spec}

A \emph{system specification} (or \emph{type}) is a tuple
with six fields:

\[S = \pair{x:X, x^{0}:X^{0}, u:U, \goesto{S}{}, y:Y, h}\]

\noindent $X$, a set of \emph{states} is called the
\emph{state space} of the system $S$.  $X^{0}\subseteq X$ is
the set of \emph{initial states}.  $U$ is the \emph{input}
space.  $\goesto{}{}\subseteq X\times\ U\times X$ is the
\emph{input-state transition relation} describing the set of
possible transitions, that is the dynamics.  $Y$ is the
\emph{output} space, and $h:X\longrightarrow Y$ is the
output function that maps states to outputs.  We elide the
component $X^{0}$ if $X^{0}=X$.  $U$ is \emph{unitary} if it
is a singleton $\set{*}$.  We elide $U$, $Y$ or $h$ if $U$
is unitary, $Y=X$ or $h$ is identity, respectively.

\begin{notation}
\label{not:sys-field}

$S$ is to be seen as a record with canonical field
\emph{names} $X$, $U$, $X^{0}$, $\goesto{S}{}$, $Y$ and $h$.
These names are associated with values when a system is
defined.  When a system is defined, we indicate the values
in place, say if $X$ equals $A$, we write
$S=\pair{X=A,\ldots}$.  The lowercase names $x$,$x^{0}$,$u$
and $y$ denote \emph{system variables} that range over $X$,
$X^{0}$, $U$ and $Y$ respectively.  Their values define the
configuration of the system at any point in its evolution:
the value of the state, the initial state, input, and
output.  A field or system variable like $X$ or $x$ of a
system $S$ is written $X_S$, alternatively $S.X$.  When $S$
is clear from the context, the subscript is omitted.  Often,
additional variable names (aliases) are used.  E.g., in the
philosopher system $Q$ defined later, the name $x$ denoting
the state system variable is aliased to $a$.  We write
$S=\pair{a:X=A,\ldots,}$ to denote that the state space
field $X$ has the value $A$, and the variable name $a$ is an
alias to the (default) state variable $x$.
\end{notation}

$S$ is \emph{deterministic} if the relation $\goesto{S}{}$
is a partial function, \emph{non-deterministic} otherwise.
The transition relation in a deterministic system is usually
denoted by a transition \emph{function} $f:X\times
U\rightarrow X$.  A system is \emph{autonomous} if $U_S$ is
unitary, non-autonomous otherwise.  A system is
\emph{transparent}, or \emph{white box} if its output
function $h$ is identity.

A \emph{system instance $s$ of type $S$}, denoted $s:S$, is
a record consisting of three fields $\pair{x:X_S, u:U_S,
  y:Y_S}$.  The fields of $s$ are accessed via the dot
notation. E.g., $s.x$, etc.  We also write $s.X$ to mean
$S.X$ where $s:S$, etc.  Often times, we overload a system
specification $S$ to also denote its instance.  Thus $x_S$
denotes the state of the system instance of type $S$, etc.

\subsection{System composition}

A complex system is best described as a composition of
interconnected subsystems.  We employ the key idea of an
interconnect due to Tabuada\cite{tabuada-book-2009}.  An
\emph{interconnect} between two systems is a relation that
relates the states and the inputs of two systems.

Let $S_c=\pair{X_c, X_c^0, U_c, \goesto{c}{}, Y_c, h_c}$ and
$S_a=\pair{X_a, X_a^0, U_a, \goesto{a}{}, Y_a, h_a}$ be two
systems, then $\I\subseteq X_c \times X_a \times U_c\times
U_a$ is called an \emph{interconnect} relation.  Informally,
an interconnect specifies the architecture of the composite
system.

The composition $S_c\times_{\I}S_a$ of $S_c$ and $S_a$ with
respect to the interconnect\ $\I$\ is defined as the system
$S_{ca}=\pair{X_{ca}, X_{ca}^0, U_{ca}, \goesto{ca}{},
  Y_{ca}, h_{ca}}$, where
\begin{enumerate}
\item $X_{ca}=\set{(x_c,x_a)\ |\ \exists
  u_c,u_a. (x_c,x_a,u_c,u_a)\in \I}$
\item $X^0_{ca}=X_{ca}\intersection (X^0_c\times X^0_a)$
\item $U_{ca} = U_c\times U_a$
\item $(x_c,x_a)\goesto{ca}{u_c,u_a}(x'_c , x'_a)$ iff 
  \begin{enumerate}
    \item $x_c\goesto{c}{u_c}x'_c$, 
    \item $x_a\goesto{a}{u_a}x'_a$, and
    \item $(x_c ,x_a ,u_c , u_a)\in \I$
  \end{enumerate}
\item $Y_{ca} = Y_c\times Y_a$
\item $h_{ca}(x_c , x_a) = (h_c(x_c),h_a(x_a))$.
\end{enumerate}

\begin{notation}[Components of a composite]
If $D = C\times_{\I} A$ then $D.C$ and $D.A$ refer to the
projections of the respective subsystems.  The individual
values e.g., $D.C.x$ and $D.A.x$ are abbreviated $D.x_C$ and
$D.x_A$.  When the composition $D$ is clear from the
context, we continue to use $x_C$ and $x_A$, etc.
\end{notation}

The restrictions of systems $A$ and $C$ are embedded as
subsystems in the composite system.  The interconnected
system $A$ is different from the $A$ that is unconnected.
The former's dynamics is governed by the additional
constraints imposed by the interconnect.  Often we will
define a system $A$, and then its composition with another
system.  Subsequent references to $A$ and its behaviour
refer to the interconnected (and hence constrained)
subsystem of the composite system.  This could occasionally
lead to some ambiguity, specially when the interconnect is
not clear from the context.  In such a case, the context
will be made clear.

Second, since the interconnect completely defines the
composite system, we will limit our description of the
composite system to the individual subsystems and the
interconnect relation and rarely write down the components
of the composite systems.

The notion of Tabuada composition subsumes several other
notions of composition.

\begin{example}[Synchronous Composition]
The synchronous
composition~\cite{hoare-csp-book-1985,Miremadi-et-al-2008},
also called ``parallel composition with shared
actions''~\cite{magee-kramer-2006}, is one in which two
systems have input alphabets with possibly non-empty
intersection.  The two systems simultaneously transition on
any input that is in the intersection; otherwise, each
system transition to the input that is in its input space,
the other process does not advance.

The synchronous composition $S_c\times_{H} S_a$ of two white
box systems $S_c$ and $S_a$ is given by
\[S_c\times_{H} S_a = \pair{X_c\times X_a, X^0_c\times X^0_a,
  U_c \union U_a, \goesto{H}{}}\]

\noindent where

\be
\item $(x_c,x_a)\goesto{H}{u}(x'_c,x'_a)$ if $u\in
  U_c\intersection U_a$, $x_c\goesto{c}{u}x'_c$ and
  $x_a\goesto{a}{u}x'_a$ 

\item $(x_c,x_a)\goesto{H}{u}(x'_c,x_a)$ if $u\in U_c\setminus
  U_a$, $x_c\goesto{c}{u}x'_c$

\item $(x_c,x_a)\goesto{H}{u}(x_c,x'_a)$ if $u\in U_a\setminus
  U_c$, $x_a\goesto{a}{u}x'_a$
\ee

This may be expressed as a Tabuada composition over systems
whose input spaces are lifted by a fresh element $\bot$,
distinct from elements in $U_c\union U_a$.

\begin{align*}
S_{c'} = \pair{X_c, X^{0}_c, U_c\union\set{\bot}, \goesto{c'}{}}\\
S_{a'} = \pair{X_a, X^{0}_a, U_a\union\set{\bot}, \goesto{a'}{}}
\end{align*}

\noindent where 

\be
  \item $\goesto{c'}{} = \goesto{c}{} \union
    \set{x\goesto{c'}{\bot}x\ | x\in X_c}$

  \item $\goesto{a'}{} = \goesto{a}{} \union
    \set{x\goesto{a'}{\bot}x\ | x\in X_a}$
\ee

The interconnect $\I_{H}$ is defined as $\I_{H}\subseteq
X_c\times X_a\times (U_c\union{\bot})\times (U_a\union{\bot})$, where
\[(x_c,x_a,u_c,u_a)\in \I_{H} \ \mbox{iff}\]

\noindent any of the following hold:

\begin{align*}
u_a = u_c  & \quad \mbox{and}\quad u_a\in U_a\intersection
U_c,\ \mbox{or}\\
u_a = \bot & \quad \mbox{and}\quad u_c\in U_c\setminus U_a,\ \mbox{or}\\
u_c = \bot & \quad \mbox{and}\quad u_a\in U_a\setminus U_c
\end{align*}

It is a simple exercise to verify that $(x_c,
x_a)\goesto{H}{u}(x'_c, x'_a)$ iff $(x_c,
x_a)\goesto{\I_{H}}{(u_c,u_a)}(x'_c, x'_a)$.
\end{example}

\subsection{Modular interconnects}
While interconnects can, in general, relate states, the
interconnects designed in this paper are \emph{modular}:
they relate inputs and outputs.  Defining a modular
interconnect is akin to specifying a wiring diagram between
two systems.  Modular interconnects drive modular design.

\subsection{Time}
\label{subsec:discrete-time}

The Generalised Dining Philosophers problem refers to time
in the assumption (``forever'') and the safety (at the
``same time'') and starvation freedom requirements (``eventually'').
Therefore, any solution to the Generalised Dining Philosophers problem
will need to be based on a model of time that addresses both
simultaneity and eternity.

It is useful to interpret $x\goesto{}{u}x'$ happening over
time: the system is first in state $x$, and then as a result
of input $u$, it comes into the state $x'$.  Nothing,
however is mentioned about \emph{when} that transition
happens in the formal definition of a system.  It is
therefore, necessary to introduce an explicit notion of time
as part of the system's dynamics.  There are several models
of time, but the modelling of time in systems is
subtle\cite{Lee:2009:CNT:1506409.1506426,lee-seshia-book-2e-2015}.
Models range from ordinal time which denotes the number of
transitions made by the system, to physical time as a
non-negative positive real number.  We assume that the
reference to time in the Generalised Dining Philosophers problem is to
physical time and not ordinal time because the Philosophers
represent processes executing in physical time.  Given this
assumption, the Generalised Dining Philosophers is a good example of the
inadequacy of ordinal time.  Consider two adjacent
philosophers $A$ and $B$ and their trajectories in real time
(\cref{tbl:2phil}).  Each philosopher, after the second step
is in the eating state.  But these steps are disjoint in
physical time: $A$ is eating from 2.0 to 7.0 seconds,
whereas $B$ doesn't start eating till 8.0 seconds.  Although
$A[2] = \e = B[2]$, these states do not coincide in physical
time.  The converse may also happen: $A$ and $B$'s states do
not coincide in ordinal time, but coincide in physical time
($t=12.0s$ onward in \cref{tbl:2phil}).  See
\cref{fig:2phil-unsynchronized}.

\begin{table}[h]
  \caption{Trajectories of states of two philosophers over
    physical time.  Note that ordinal time $i$ of the two
    philosophers need not correspond to the same continuous
    time.  This table illustrates why it is necessary to
    account for continuous (real) time rather than logical
    time for the Generalised Dining Philosophers problem.
\label{tbl:2phil}}
\begin{center}
\begin{tabular}{rllll}
\hline
Time (sec) & Event & $A$'s state & $B$'s state & \\
\hline
0 &                         & $A[0] = \t$ & $B[0]= \t$ & \\
\hline
1.0 & $A1:\t\goesto{}{} \h$ & $A[1]  = \h$ &  & \\
\hline
2.0 & $A2:\h \goesto{}{} \e$ & $A[2] = \e$ &  & \\
\hline
5.0 & $B1:\t\goesto{}{} \h$ &  & $B[1] = \h$ & \\
\hline
7.0 & $A3:\e\goesto{}{}\t$ & $A[3] = \t$ &  & \\
\hline
8.0 & $B2:\h\goesto{}{}\e$ &  & $B[2] = \e$ & \\
\hline
11.0 & $A4:\t\goesto{}{}\h$ & $A[4] = \h$ &  & \\
\hline
12.0 & $A5:\t\goesto{}{}\e$ & $A[5] = \e$ &  & \\
\hline
\end{tabular}
\end{center}
\end{table}

\begin{figure}
\caption{Trajectories of two processes $A$ and $B$ over
  ordinal and physical time.\label{fig:2phil-unsynchronized}}
\centering{
 \includegraphics[width=5in]{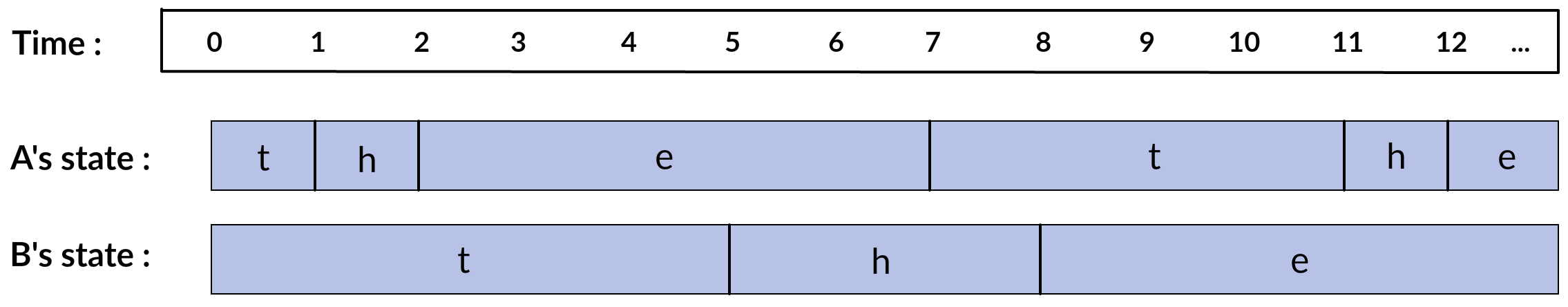}
}
\end{figure}

We employ the idea from time triggered architecture
\cite{Kopetz:1994:TPF:177213.177219} which is based on a
global clock with fixed time period against which all
subsystems transitions are synchronised, much like in a
hardware circuit.  The global clock ticks with a fixed time
period $\tau > 0$.  All systems share the global clock and
transitions are synchronised to occur in step with the
clock.  Values exist over continuous time, but are polled at
regular fixed intervals.  If $t\in \N$, $x[t]$ means the
value of $x$ at physical time $t\tau$.  Furthermore, the
transitions of all subsystems are synchronised: the $t$th
transition of each subsystem is transacted at exactly the
same physical time for each subsystem.  Transitions are
assumed to not be instantaneous.  If input is available at
clock cycle $t$, then the output of the new state as a
result of the transition is available only one clock cycle
later.

With clocked time, the composite system's dynamics may be
described as evolving over the same clock time as that of
the subsystems' dynamics.  Furthermore, since transitions
are not instantaneous but incur a delay, feedback becomes
easier to model.  Such a model has already been successfully
adopted by the family of synchronous reactive languages like
Esterel, Lustre and Signal\cite[Chap 2]{gamatie2009designing}.
Furthermore, extracting asynchronous behaviour becomes a
matter of making assumptions on the relative time periods
between input events and the computation of outputs.

\subsection{Clocked Systems}
The notion of a system is general enough to be able to model
a clock and also a system whose transitions are synchronised
with the ticks of the clock.  

\begin{notation}
\label{not:B}
Let \[B=\set{0,1}\]
\noindent denote a set of binary values and let $b$ range
over $B$.
\end{notation}

\subsubsection{Clock as a system}
\label{subsub:clock}
A clock ticking every $\tau$ units of time may be modelled
as a system $K$ whose state is time and whose input is an
arbitrary non-negative interval of time.  A state $x$
relates to $x'$ via time interval $u$ if $x'=x+u$.  The tick
is modelled as an impulse occurring at multiples of $\tau$.
\[K(\tau) = \pair{X,X^{0}, U,f:X,U\rightarrow X, Y, h}\]
where
\begin{itemize}
   \item $X=\Rnn$
   \item $X^{0}=\set{0}$
   \item $U=\Rp$
   \item $f(x,u) = x+u$
   \item $Y=B$
   \item $h(x) = 1$ if $x = n\tau$ for some $n\in \N$, $0$ otherwise.
\end{itemize}

\subsubsection{Extending a system interface to accommodate a clock}
\label{subsub:extend-system-with-clock}
To synchronise a system $S$ with a clock, it is first
necessary to extend the interface of the system to
accommodate an additional input of type $B$.  The extension
$S^{K}$ of $S$ is $S^{K}=\pair{X, X^{0}, U,
  \goesto{S^{K}}{}, Y, h}$ where
\begin{itemize}
 \item $X = X_S$
 \item $X^{0} = X^{0}_S$
 \item $U = U_S\times B$
 \item $x\goesto{S^{K}}{u_S,b}x'$ iff $x\goesto{S}{u_S}x'$ and
   $b=1$.
 \item $Y = Y_S$
 \item $h = h_S$ 
\end{itemize}
$S^{K}$ makes a transition only if it is admissible by the
underlying system $S$ \emph{and} its second input is 1.

\subsubsection{Synchronising a system with a clock}
\label{subsub:sync-system-with-clock}
The interconnect $\I$ wires the clock's output as the second
input of $S^{K}$:
\[\I = \set{(x_K, x_{S}, u_K, u_{S}, b)\ |\ h_K(x_K) = b}\]
In the composite system $T=K\times_{\I}S^{K}$, the
transitions of a system $S^{K}$ are now synchronised to
occur at each clock tick.  That is, the component
$x_{S^{K}}$ of $x_T$ is constant during the semi-open
interval $[i\tau, (i+1)\tau)$ and changes only at multiples
  of $\tau$.  Thus we may now treat $x_{S^{K}}$ (and
  $y_{S^{K}}$) as functions over $\N$, the set of naturals.
  Furthermore, inputs occurring at other than instances of
  the clock ticks effect no change of state.

From here on, we will not explicitly model the clock or its
composition with systems.  Instead, we assume that all
systems we design are implicitly clocked and there is one
global clock that drives all the subsystems of a system.

The dynamics of a system $S=\pair{X,X^{0},U,\goesto{S}{}, Y,
  h}$ suitably extended and interconnected with a clock may
now be described as a discrete dynamical system:
\begin{align}
  x[i] \goesto{S}{u[i]} x[i+1]\\
  y[i] = h(x[i])
\end{align}
\noindent where $i\in \N$ denotes the $i$th clock cycle and
$x$, $u$ and $y$ are functions from $\N$ to $X$, $U$, and
$Y$ respectively.

\begin{notation}
\label{not:prime}
For the sake of convenience, but at the risk of introducing
some ambiguity, we use the notation $x$ to mean $x[i]$ and
$x'$ to mean $x[i+1]$, the value at the next clock cycle.
Likewise for other variables.
\end{notation}

\section{The One Dining Philosopher problem}
\label{sec:1dp}
We start with N=1, the simplest case of the problem.  The 1
Diner problem is simple, but not trivial.  Indeed, as we
shall see, it reveals important insights about both the
problem structure and its solution for the general (N>1)
case.

We now systematise the formal construction of philosopher
system connected to a controller via feedback.  We start
with a philosopher model $Q$ that is completely
unconstrained in its behaviour, then build a deterministic
model $P$, identical in behaviour with $Q$, but in which
$Q$'s non-determinism is encoded as choice input.  $P$'s
interface is not quite suitable for participating in feedback
control.  That requires three more steps: First, extending
$P$ to the model $M$ which accommodates an additional
control input.  Second, defining a controller $C$ that
generates control input (\Cref{subsec:C}).  Third,
wiring the controller with the system $M$ to build a
feedback system $R$ (\Cref{subsec:R}).  Timing analysis
reveals that because of delays introduced in the feedback,
the control input may not arrive in time for it to be useful
(\cref{subsec:delays}).  A new input type where the signal
is present or absent and a plant $S$ working with this input
(\cref{subsec:S}) need to be composed with the controller in
such a way that the rate at which choice input arrives is
synchronised with the rate at which the controller computes
its output to yield the system $T$ (\cref{subsec:T}).

\subsection{Philosopher as an autonomous non-deterministic
  system}
\label{subsec:phil-Q}

An unconstrained philosopher (free to switch or stay) may be
modelled as an autonomous, non-deterministic, transparent
system 
\[Q = \pair{a:X=\Act, a^{0}:X^{0}=\set{\sf{t}},
  \goesto{Q}{}}\]

\noindent  where $\goesto{Q}{*}$ is defined via the
edges in \cref{fig:dp-states}:
 \begin{align*}
    \t\goesto{Q}{*}\t, \quad &\quad   \t\goesto{Q}{*}\h\\
    \h\goesto{Q}{*}\h, \quad &\quad   \h\goesto{Q}{*}\e\\
    \e\goesto{Q}{*}\e, \quad &\quad   \e\goesto{Q}{*}\t
 \end{align*}

\begin{notation}
\label{not:a}
We use the identifier $a$ to range over $\Act$.
\end{notation}

\subsection{Choice deterministic philosopher} 
\label{subsec:phil-P}
The non-determinism of $Q$ may be externalised by capturing
the choice at a state as binary input $b$ of type $B$ to the
system.

The resultant system is deterministic with respect to the
choice input.  On choice $b=0$ the system stays in the same
state; on $b=1$ it switches to the new state.  This is shown
below in the construction of a non-autonomous,
deterministic system
\[P =\pair{a:X=\Act, a^{0}:X^{0}=\set{\t}, b:U=B, f_P}\]
\noindent where $f_P:\Act\times B\rightarrow\ \Act$ is defined as
\begin{align}
f_P(a,0)  = \id{stay}(a)\label{eqn:f_P-eq1}\\
f_p(a,1)  = \id{switch}(a)\label{eqn:f_P-eq2}
\end{align}

\noindent and $\id{stay}:\Act\rightarrow \Act$ and
$\id{switch}:\Act\rightarrow \Act$ are given by
\begin{align*}
\id{stay}(a)    &= a\\
\id{switch}(\t) &= \h\\
\id{switch}(\h) &= \e\\
\id{switch}(\e) &= \t
\end{align*} 

The two systems $P$ and $Q$ are equivalent in behaviour.
%\begin{comment}
\begin{prop}
$\B^{\omega}(Q) =  \B^{\omega}(P)$.
\end{prop}
\begin{proof}
For both systems, the output space and state space
are identical and the output functions are identity
functions.  Thus state and output traces are identical.

$\B^{\omega}(Q)\subseteq \B^{\omega}(P)$: For each state
trace $\overline{x}$ in $\B^{\omega}(Q)$, we construct an
input-state trace in $P$ and show that the corresponding
state trace in $P$ is $\overline{x}$.

For each $i\in \N$, let $a_i$ be the $i$th state in the
trace $\overline{a}$.  Then there is an input-state
transition $a\goesto{Q}{*}a'$.  If $a=a'$, then
we construct the transition $a\goesto{P}{0}a'$ of
$P$.  If $a\neq a'$, then we construct the transition
$a\goesto{P}{1}a'$. 

$\B^{\omega}(P)\subseteq \B^{\omega}(Q)$: For the
input-state transition $a\goesto{P}{u}a'$, we construct a
transition $a\goesto{Q}{*}a'$ in $Q$.

\end{proof}

\subsection{Interfacing control}
\label{subsec:M}

The Philosopher $P$ needs to be extended to admit control
input.  Control is accomplished through \emph{control input}
or \emph{command}:
\[\id{Cmd} = \set{\pass, \overzero, \overone}\]

\begin{notation}
\label{not:c}
We use the identifier $c:\id{Cmd}$ to denote a command. 
\end{notation}

The dynamics of a philosopher subject to a choice input
combined with a control input may be described as follows.
With command $c$ equal to $\pass$, the philosopher follows
the choice input $b$.  With the command equal to
$\override\;b$, the input is ignored, and the command
prevails in determining the next state of the philosopher
according to the value of $b$: stay if $b=0$, switch if
$b=1$.  

The philosopher system extended with a control input plays
the role of a model and is given by the transparent
deterministic system

\[M=\pair{a:X=\Act, a^{0}:X^{0}=\set{\t}, (b,c): (U_P\times U_F)=B \times\Cmd, f_M}\]

where $U_P$ denotes the preference (choice) and $U_F$
defines the forced (control) input and $f_M:\Act\times
(B\times\Cmd)\rightarrow \Act$ is given by
\begin{align*}
f_M(a, b, \pass) &= f_P(a,b)\\
f_M(a, \_, \override\;b) &= f_P(a,b)
\end{align*}
(In the second case, $\_$ indicates an unnamed formal
parameter whose name is not relevant because it is never
used subsequently.)

%% \begin{comment}
%% $M$'s dynamics is governed by two inputs, the first is the
%% choice input, which is independent of the controller.  The
%% second is the control input: a control input $\pass$ allows
%% the choice input to drive the dynamics, a control input
%% $\override(b)$ replaces the choice input with $b$ to drive
%% the dynamics.
%% \end{comment}

\subsection{Controller}
\label{subsec:C}

A controller is a transparent deterministic system $C$ whose
input is an activity and whose output is a control signal of
type $\id{Cmd}$.  The controller's role is to examine its
input and compute an output command based on the following
\emph{control law}: if its input is $\h$, then the output is
$\overone$, otherwise it is $\pass$\footnote{Other control
  laws are possible too.  As will be shown, the control law
  specified here is adequate to ensure the starvation freedom property
  for the N=1 Philosopher problem.}.  The controller's state
space is $\Cmd$ with initial state $\pass$ and its input $a$
is an activity.
\[C = \pair{c:X=\Cmd, c^{0}:X^{0}=\set{\pass}, a:U=\Act, f_C}\]
and $f_C:\Cmd\times \Act\rightarrow \Cmd$ is defined as
\begin{align*}
f_C(c,a) &=g_C(a)\\
g_C(\h) &= \overone\\
g_C(\e) &= \pass\\
g_C(\t) &= \pass 
\end{align*}

\subsection{Feedback composition}
\label{subsec:R}
Consider the interconnect $\IR\subseteq X_C\times X_M\times
U_C\times U_M$, between the controller $C$ and the model $M$
\[\IR = \set{(C.y, M.y, C.u, (M.b, M.c)) | M.y = C.u, C.y = M.c}\] 
$\IR$ specifies feedback composition since it connects the
philosopher's output $M.a$ to the input of the controller
$C.a$ and the controller's output $C.c$ to the control input
$M.c$ of the plant.  The composition $R=C\times_{\IR} M$ is
a deterministic system whose definition follows from the
definition of system composition.  We write $a$, $b$ and $c$
to denote the variables $R.M.a$, $R.M.b$ and $R.C.c$.

%% \noindent where 

%% \be

%% \item $X_R = \set{(x_C,X_M)\ |\ \exists u_C,u_P,u_F
%%   \ \mbox{and} (x_C,x_M,u_C,u_P,u_F)\in \I_{R}}$,

%% \item $X^{0}_R=\set{(\pass,\t)}$, 

%% \item $U_R=U_C\times U_M=A\times B\times \id{Cmd}$.  

%% \item $f_R(x_C, x_M, u_C, u_P, u_F) =  (x'_C, x'_M)$ where 
%%  \be
%%  \item $x'_C = g_C(x_M)$: the controller's next state is
%%    determined by the model's present state.

%%  \item $x'_M = f_M(x_M, u_P, x_C)$: the model's next state
%%    is determined by its current state $x_M$, the
%%    philosopher's preference $u_P$ and the present control
%%    input $x_C$.
%%  \ee
%% \ee 

\Cref{fig:d1-feedback-system} shows a schematic of the
system $R$.

% adapted from
% http://www.texample.net/media/tikz/examples/TEX/control-system-principles.tex
\usetikzlibrary{shapes,arrows}

\begin{comment}
:Title: Control system principles
:Tags: Block diagrams 

An example of a control system with a feedback loop. Block diagrams like this
are quite time consuming to create by hand. The relative node placement feature
makes it a bit easier, but it works best when the nodes have equal widths. 
However, the results are quite pleasing and hopefully worth the effort. 
You can probably speed up the process by creating custom macros. 
\end{comment}

\tikzstyle{block} = [draw, fill=blue!20, rectangle, 
    minimum height=3em, minimum width=6em]
\tikzstyle{sum} = [draw, fill=blue!20, circle, node distance=1cm]
\tikzstyle{input} = [coordinate]
\tikzstyle{output} = [coordinate]
\tikzstyle{point} = [coordinate]
\tikzstyle{pinstyle} = [pin edge={to-,thick,black}]

\begin{figure}%[r][6cm]
\caption{Feedback Control System $R$ for the 1 Diner problem\label{fig:d1-feedback-system}}
\centering{
% The block diagram code is probably more verbose than necessary
\begin{tikzpicture}[auto, node distance=2cm,>=latex]
    % We start by placing the blocks
    \node [input, name=input] {};
    \node [block, right of=input, node distance=3cm] (controller) {$x_C$};
    \node [block, right of=controller, pin={[pinstyle]above:$u_P$},
            node distance=6cm] (model) {$x_M$};
    % We draw an edge between the controller and system block to 
    % calculate the coordinate u. We need it to place the measurement block. 
    \draw [->] (controller) -- node[name=u] {$y_C$} (model);
    \node [output, right of=model, node distance=4cm] (output) {};
    \node [point, below of=u] (feedback) {feedback};

    % Once the nodes are placed, connecting them is easy. 
    \draw [draw,->] (input) -- node {} (controller);
    \draw [->] (model) -- node [name=y] {$y_M$}(output);
    \draw [->] (y) |- node[name=f] {} (feedback);
    \draw [-] (feedback) -| node[pos=0.01] {feedback}  node [near end] {} (input);
\end{tikzpicture}
} %centering
\end{figure}
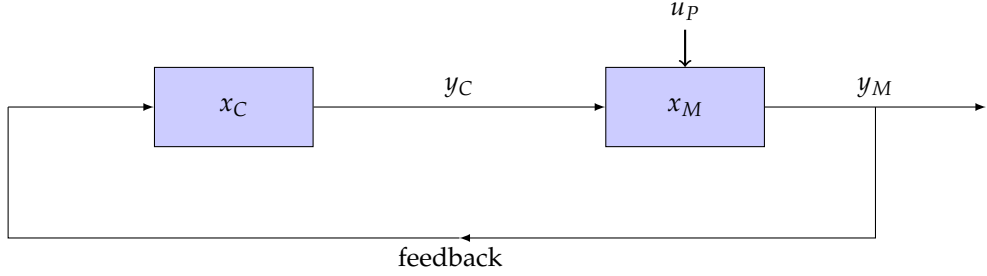

\subsection{Delays, Race conditions and Input rate}
\label{subsec:delays}
A simple example prefix run reveals a problem in the design
of the composite system $R$.
\Cref{tab:comp-for-prefix-1000} compares the desired and
actual behaviour of $R$ for the input choice sequence
$\pair{1,0,0,0}$.  One expects that the philosopher in state $\t$
at $t=1$, is commanded at $t=2$ to switch to $\e$ by the
controller.  However, the controller's output $\pass$ at
$t=2$ is computed based on the \emph{previous} philosopher
state at $t=1$, which was $\t$.  It takes one time step to
compute the control input, so the control input computed
is out of sync with the choice input.

%\begin{comment}

\begin{table}
  \caption{Computations of state variables for the prefix
    $\pair{1,0,0,0}$ of choice input
    $b$\label{tab:comp-for-prefix-1000}.  The column labeled
    \id{desired} shows the expected value of the activity of
    the Philosopher constrained under the influence of a
    controller.  $R.a$ denotes the computed output of the
    subsystem when $M$ coupled with the output $R.c$ of the
    controller subsystem $C$.  Note that the computed
    behaviour $R.a$ does not match the desired
    behaviour. (The first mismatch is at clock cycle 2.)}

\begin{center}
\begin{tabular}{rrlll}
\hline
$t$ & $b$   & \id{desired} & $a'=f_M(a,b, c)$ & $c'=g_C(a)$\\
\hline\\
0 & $b^{0}=1$ & $\t$ & $\t= a^{0}$ & $c^{0}=\pass$\\
\hline
1 & 0 & $\h$ & $\h=f_M(\t,1,\pass)$ & $g_C(\t)=\pass$\\
\hline
2 & 0 & $\e$ & $\h=f_M(\h,0,\pass)$ & $g_C(\h)=\overone$\\
\hline
3 & 0 & $\e$ & $\e=f_M(\h,0,\overone)$ & $g_C(\h)=\overone$\\
\hline
4 & 0 & $\e$ & $\t=f_M(\e,0,\overone)$ & $g_C(\e)=\pass$\\
\hline
\end{tabular}
\end{center}
\end{table}
% Emacs 24.3.1 (Org mode 8.2.10)
%\end{comment}

\begin{figure}
\caption{Graphical representation of trajectories in
  \cref{tab:comp-for-prefix-1000}.
  \label{fig:2phil}}
\centering{
 \includegraphics[width=5in]{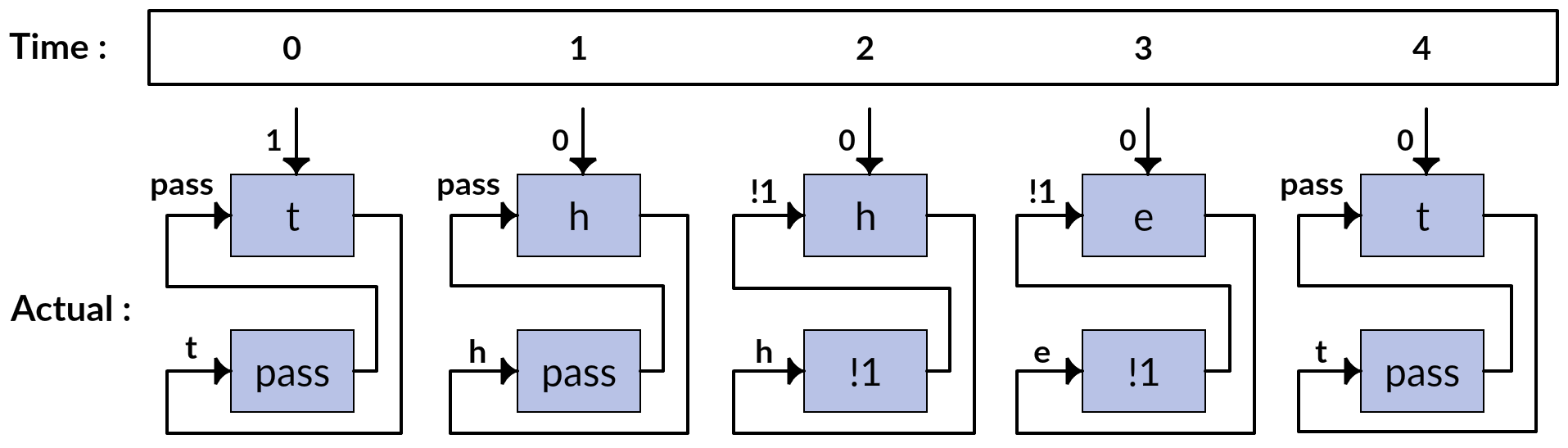}
}
\end{figure}

\subsection{Philosopher system with slower choice input}
\label{subsec:S}
In designing the controller and the new dynamics of the
philosopher, one needs to take into account the fact that
the controller needs one time step to compute its control
input.  During this step, no new input should arrive.  In
other words, the choice input should arrive slow enough
so that it synchronises with the arrival of the control input.

Keeping this in mind, we redesign the choice type to include
a $\bot$\ (read ``bottom'') input that denotes the absence
of choice.  This lifted input choice domain
\[\Bbot = \set{\bot}\union B\]
is now used to define the absence of input ($\bot$) or the
presence of a choice input (either 0 or 1).  We let the
variable $\bbot$ range over elements of $\Bbot$.  

A new deterministic and transparent philosopher system $S$
(for \emph{slower}) may then be defined as follows:

\[S=\pair{a:X=\Act, a^{0}:X^0=\set{\t}, (\bbot,c):U=\Bbot\times\Cmd, f_S}\]

\noindent where $f_S:\Act\times\Bbot\times\Cmd\rightarrow \Act$ is defined as 
\begin{align}
f_S(a, \bot, c) &= a\\
f_S(a, b, c)    &=f_M(a,b,c), \ \mbox{otherwise}
\end{align}

Expanding the definition of $f_M$, we have
\begin{align}
f_S(a, \bot, c) &= a\label{eqn:f_S-eq1}\\
f_S(a, b, \pass) &= f_P(a,b)\label{eqn:f_S-eq2}\\
f_S(a, \_, \override\;b)  &= f_P(a, b)\label{eqn:f_S-eq3}
\end{align}

If the choice input is $\bot$, the model $S$'s next state
stays the same as the previous state, irrespective of the
control input.  Otherwise, the $S$'s behaviour is just like
that of $M$: its next state is governed by the function
$f_M$, which expands to the two clauses $f_P$ shown above.

\subsection{System $T$:  feedback control system solving the
1 Diner problem}
\label{subsec:T}

The new composite system $T=C\times_{\I_{T}} S$ is defined
with respect to the interconnect 

%\[\I_{T} =\set{(x_C,x_S, u_C, u_{P'}, u_F) | x_S = u_C, x_C =  u_F}\]

\[\I_{T} =\set{(C.c,S.a, C.a, S.\bbot, S.c) | S.a=C.a, C.c=S.c}\]
\noindent which is similar to the interconnect $\IR$.  We
write $a$, $\bbot$ and $c$ to denote $S.a$, $S.\bbot$ and
$S.c$. 

In composing the system $S$ with the controller $C$, we
assume that the choice input to the philosopher alternates
between absent ($\bot$) and present (0 or 1).  In other
words, we assume that the choices are expressed slowly (with
one cycle of inactivity in between) so that the controller
has enough time to compute the control input.  (Another way
of achieving this is to drive the philosopher system with a
clock of time period of two units.)

{\bf Example} Consider the prefix of $T$'s behaviour on an
input choice stream with prefix
\[\pair{\bot,\ 1,\ \bot,\ 0,\ \bot,\ 0,\ \bot,\ 0,\ \bot,\ 1,\ \bot}\]

Note that each choice input is interspersed with one $\bot$.
The trace of $T$ shown in \cref{tab:1d-sync} demonstrates
that the discrepancy in \cref{tab:comp-for-prefix-1000} is
avoided.

\begin{table}
\caption{Computations of state variables in the system $T$
 for the prefix $\pair{\bot,\ 1,\ \bot,\ 0,\ \bot,\ 0,\ \bot,\ 0,\ \bot,\ 1,\ \bot}$
 of choice $u_{P'}$. \label{tab:1d-sync}}

\begin{center}
\begin{tabular}{rllll}
\hline
$t$ & $\bbot$ & desired & $a$ & $c$\\
\hline
 &  &  &  & \\
0 & \(\bot\) & \t & \t & \pass\\
\hline
1 & 1 & \t & \t & \pass\\
\hline
2 & \(\bot\) & \h & \h & \pass\\
\hline
3 & 0 & \h & \h & \overone\\
\hline
4 & \(\bot\) & \e & \e & \overone\\
\hline
5 & 0 & \e & \e & \pass\\
\hline
6 & \(\bot\) & \e & \e & \pass\\
\hline
7 & 0 & \e & \e & \pass\\
\hline
8 & \(\bot\) & \e & \e & \pass\\
\hline
9 & 1 & \e & \e & \pass\\
\hline
10 & \(\bot\) & \t & \t & \pass\\
\hline
\end{tabular}
\end{center}
\end{table}

\subsection{Dynamics of the 1 Diner system}
\label{subsec:analysis-1D}
We examine dynamics of the composite system $T$ with
philosopher subsystem $S$ interconnected with the controller
$C$.  Let $t\in \N$ denote the number of clock cycles of the
global clock whose time period is assumed one unit.  We
assume that each subsystem takes one clock cycle to compute
its next state given its input.  We also assume that
$\bbot[t]=\bot$ if $t$ is even, and equal to choice $b$,
where $b\in B$, if $t$ is odd.  

The following system of equations define the dynamics of the
1 Diner system:

\textbf{Initialisation:}

\begin{align*}
a[0] &= \t\\
\bbot[0] &= \bot\\
c[0] &= pass
\end{align*}

\textbf{Next state functions:}
\begin{align}
a[t+1] &= f_S(a[t], \bbot[t], c[t])\label{eqn:a-tp1-fs}\\
c[t+1] &= g_C(a[t])\label{eqn:c-tp1-gc}
\end{align}

Using the prime (') notation, these may be rewritten as
\begin{align}
a'     &= f_S(a, \bbot, c)\label{eqn:pa-tp1}\\
c'     &= g_C(a)\label{eqn:pc-tp1}
\end{align}

Given $\bbot[0]=\bot$, it is easy to verify that 
\begin{align*}
c[1]   &= c[0] = \pass\\
a[1]   &= a[0] = \t\\
\end{align*}

Tracing the dynamics from time $2t$ to $2t+3$, we have:

\begin{align*}
a[2t+1] &= f_S(a[2t], \bbot[2t], c[2t])\\
        &= f_S(a[2t], \bot, c[2t])& \mbox{From the defn. of $\bbot[2t]$} \\
        &= a[2t] \numberthis\label{eqn:a-2tp1-2t}\\\\
c[2t+1] &= g_C(a[2t])\\
        &= g_C(a[2t+1]) &\mbox{From \cref{eqn:a-2tp1-2t}}\numberthis\label{eqn:c-2tp1-gc}\\\\\\
a[2t+2] &= f_S(a[2t+1], \bbot[2t+1], c[2t+1])\\
        &= f_S(a[2t+1], b[2t+1], c[2t+1])&
\mbox{From the defn. of $\bbot[2t+1]$}\numberthis\label{eqn:a-2tp2-fs}\\\\
c[2t+2] &= g_C(a[2t+1])\\
        &= g_C(a[2t])&\mbox{From \cref{eqn:a-2tp1-2t}}\\
        &= c[2t+1]\numberthis\label{eqn:c-2tp2-2tp1}\\\\\\
a[2t+3] &= f_S(a[2t+2], \bbot[2t+2], c[2t+2])\\
        &= f_S(a[2t+2], \bot, c[2t+2])\\
        &= a[2t+2] & \mbox{From the defn. of $f_S$ (\cref{eqn:f_S-eq1})}\numberthis\label{eqn:a-2tp3-2tp2}\\
        &= f_S(a[2t+1], b[2t+1], c[2t+1]) &
\mbox{From
  \cref{eqn:a-2tp2-fs}}\numberthis\label{eqn:a-2tp3-fs}\\\\
c[2t+3] &= g_C(a[2t+2])\\
        &= g_C(a[2t+3]) & \mbox{From \cref{eqn:a-2tp3-2tp2}}\\
\end{align*}

From this we conclude the following, for $t\in \N$.

\begin{align*}
a[2t+3] &= f_S(a[2t+1], b[2t+1], c[2t+1])&\mbox{Ref. \cref{eqn:a-2tp3-fs}}\\
\bbot[2t+1] &= b[2t+1] &\mbox{Assumption}\\
c[2t+1] &= g_C(a[2t+1])&\mbox{Ref. \cref{eqn:c-2tp1-gc}}
\end{align*}

\subsection{Simplified dynamics by polling}
\label{subsec:simplified-dynamics-1diner}
The dynamics may be reduced to a simpler system of equations
if we consider polling the system once every two clock
cycles.  We define a \emph{step} to be two clock cycles,
with the $i$th step corresponding to the $2i+1$ clock cycle.
The relation between the new set of variables $[a_2, b_2,
  c_2]$ and the previous variables is shown
below\footnote{We have assumed $\bbot[0]=\bot$.  If we
  assumed that $\bbot[0]=b[0]$, then the equations would be
  $a_2[i]=a[2i]$, etc.}:
\begin{align*}
a_2[0]   &= a[1] =\t\\
c_2[0]   &= c[1] =\pass\\
b_2[0]   &= \bbot[1] =b[1]
\end{align*}
and
\begin{align*}
a_2[i]   &= a[2i+1]\\
b_2[i]   &= \bbot[2i+1]\\
c_2[i]   &= c[2i+1]
\end{align*}

To continue using the old variables, we abuse notation and
write $a$ etc., to refer to $a_2$, etc.  Thus the polled
dynamics, indexed over steps $i$ reduces to:
\begin{align*}
a[0]   &= \t\\
c[0]   &= \pass\\
c[i]   &= g_C(a[i])\\
a[i+1] &= f_S(a[i], b[i], c[i])
\end{align*}

We simplify notation further by making the indexing with $i$
implicit and writing $a$ to mean $a[i]$ and $a'$ to denote
$a[i+1]$.  Thus
\begin{align}
a^0    &= t\label{eqn:az}\\
c^0    &= \pass\label{eqn:cz}\\
c      &= g_C(a)\label{eqn:c}\\
a'     &= f_S(a, b, c)\label{eqn:ap}
\end{align}

\Crefrange{eqn:az}{eqn:ap} completely capture the 'polled
dynamics' of the composite system consisting of the
controller with the philosopher.  Note that $\bot$ is no
longer relevant to the polled dynamics.

\subsection{Correctness of the solution for the 1 Diner problem}
\label{subsec:1d-correctness}

\begin{prop}
  Consider the composite system $T = C\times_{\I}S$ working
  under the assumption that choice inputs arrive only at
  odd cycles.  Then, the system correctly implements the
  starvation freedom constraint of the 1 Diner problem which states
  that the philosopher doesn't remain hungry forever.  It
  defers to the philosopher's own choice (stay at the same
  state or switch to the next) when the philosopher is not
  hungry.
\end{prop}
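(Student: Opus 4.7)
The plan is to work in the simplified polled dynamics of \crefrange{eqn:az}{eqn:ap}, which already absorbs the $\bot$-interleaving and the one-cycle computational delay of the controller. Since the controller takes one step to act on what it observes, step-indexed reasoning in this reduced form is exactly what is needed: it lets us study the interaction of $g_C$ and $f_S$ without having to re-derive the synchronisation lemmas of \cref{subsec:analysis-1D}.

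I would split the statement into the two obligations implicit in it: \emph{(i)} non-intrusiveness, i.e., when $a \in \set{\t,\e}$ the evolution agrees with the free philosopher dynamics; and \emph{(ii)} starvation freedom, i.e., $a = \h$ implies $a' = \e$. Both follow from the same case analysis on the current activity $a$. For \emph{(i)}, if $a = \t$ or $a = \e$, then by \cref{eqn:c} and the definition of $g_C$ we have $c = \pass$, so by \cref{eqn:ap} and \cref{eqn:f_S-eq2}, $a' = f_S(a,b,\pass) = f_P(a,b)$. Thus the next state is entirely determined by the philosopher's own choice $b$, which is exactly the unconstrained dynamics of \cref{fig:dp-states}. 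For \emph{(ii)}, if $a = \h$, then $c = g_C(\h) = \overone$, and by \cref{eqn:f_S-eq3}, $a' = f_S(\h,\_,\overone) = f_P(\h,1) = \id{switch}(\h) = \e$, regardless of the value of the choice input $b$.

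From \emph{(ii)} one obtains, in fact, a stronger claim than starvation freedom: a hungry philosopher becomes eating in a single polled step, i.e., within two cycles of the global clock. Combined with the standing assumption that no philosopher eats forever in a single stretch, this also rules out any secondary deadlock. Non-intrusiveness, \emph{(i)}, is precisely the third requirement of \cref{defn:gdp} specialised to $N=1$, since the thinking-or-eating philosopher's next state is computed by $f_P$, which is choice-equivalent to the non-deterministic $Q$ by the equivalence $\B^\omega(Q) = \B^\omega(P)$.

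I expect no real obstacle in the algebraic manipulation; the subtle part is rhetorical rather than technical, namely making clear that ``eventually'' in \cref{defn:gdp} is witnessed here by the very sharp bound of one polled step, and that the polled dynamics faithfully represent the original clocked composite $T$ under the stated alternation assumption on $\bbot$. That faithfulness is already established in \cref{subsec:simplified-dynamics-1diner}, so the proof essentially reduces to the two-line case analysis above.
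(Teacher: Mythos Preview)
Your proposal is correct and follows essentially the same approach as the paper: reduce to the polled dynamics of \crefrange{eqn:az}{eqn:ap} and perform a case analysis on $a$, showing that $a=\h$ implies $a'=\e$ while $a\neq\h$ implies $a'=f_P(a,b)$. Your write-up is in fact more explicit than the paper's, which simply asserts these two facts as ``simple consequences of the polled dynamics'' without spelling out the substitutions through $g_C$ and $f_S$.
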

\begin{proof}
The result follows from the following propositions, which
are simple consequences of the polled dynamics: 
\be
\item \label{itm:hungry} if $a=\h$, then $a'=\e$.
\item \label{itm:other} if $a\neq \h$, then $a'=f_P(a,b)$. 
\ee

\end{proof}

\section{N Dining Philosophers with Centralised control}
\label{sec:ndp}
We now look at the Generalised Dining Philosophers
problem.  We are given a graph $G=\pair{V,E}$, with $|V|=N$
and with each of the N vertices representing a philosopher
and $E$ representing an undirected, adjacency relation
between vertices.  The vertices are identified by integers
from 1 to N.  

Each of the N philosophers are identical and modeled as the
instances of the system $S$ described in the 1 Diner case.
These N vertices are all connected to a single controller
(called the hub) which reads the activity status of each of
the philosophers and then computes a control input for that
philosopher.  The control input, along with the choice input
to each philosopher computes the next state of that
philosopher.

\begin{figure}
\caption{Wiring diagram describing the architecture of
  centralised
  controller.\label{fig:centralised-n-diners-arch}}
\centering{
 \includegraphics[width=5in]{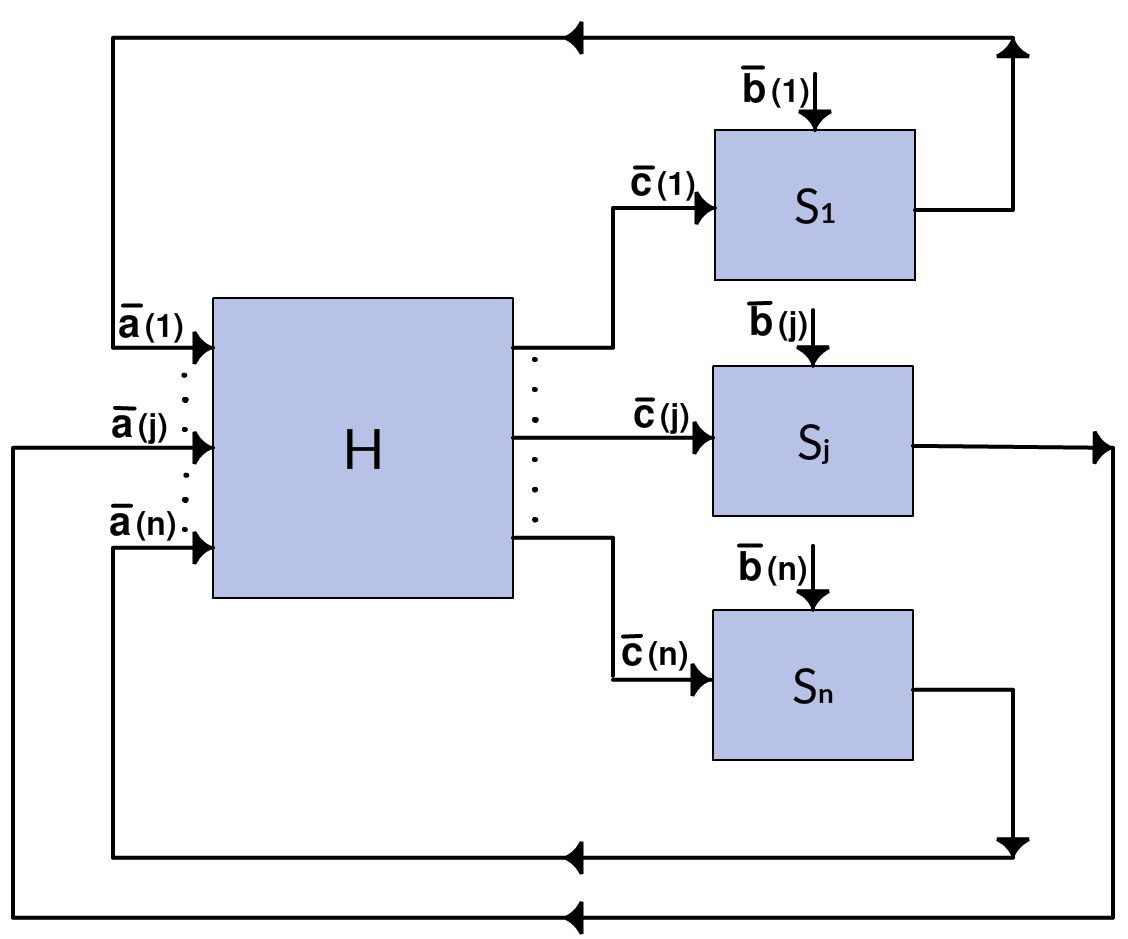}
}
\end{figure}

\begin{notation}
\label{not:maps}

Identifiers $j,k,l\in V$ denote vertices.

An \emph{activity map} $\oa: V\rightarrow A$ maps 
vertices to their status, whether hungry, eating or
thinking. 

A \emph{choice map} $\ob: V\rightarrow B$ maps to
each vertex a choice value.  

A \emph{maybe choice map} $\om: V\rightarrow \Bbot$ maps to
each vertex a maybe choice value (nil or a choice). 

A \emph{command map} $\oc: V\rightarrow \id{Cmd}$ maps to
each vertex a command.

If $v$ is a constant, then $\overline{v}$ denotes a function
that maps every vertex to the constant $v$.
\end{notation}

The data structures and notation used in the solution are
described below:

\be
 \item $G=(V,E)$, the graph of vertices $V$ and their
   adjacency relation $E$.  $G$ is part of the hub's
   internal state.  $G$ is constant throughout the problem. 

   We write $\set{j,k}\in E$, or $E(j,k)$ to denote that
   there is an undirected edge between $j$ and $k$ in $G$.
   We write $E(j)$ to denote the set of all neighbours of
   $j$.

 \item $\oa:V\rightarrow \set{\t,\h,\e}$, an activity map.
   This is input to the hub controller.

 \item $D: (j,k)\in E\rightarrow \set{j,k}$, is a directed
   relation derived from $E$.  $D$ is called a
   \emph{dominance map} or \emph{priority map}.  For each
   edge $\set{j,k}$ of $E$ it returns the source of the
   edge.  The element $\set{j,k}\mapsto j$ of $D$ is
   indicated $j\mapsto k$ ($j$ dominates $k$) whereas
   $\set{j,k}\mapsto k$ is indicated $k\mapsto j$ ($j$ is
   dominated by $k$).  If $\set{j,k}\in E$, then exactly one
   of $j\mapsto k\in D$ or $k\mapsto j\in D$ is true.

  $D(j)$ is the set of vertices dominated by $j$ in $D$ and
   is called the set of \emph{subordinates} of $j$.  $D^{-1}(j)$
   denotes the set of vertices that dominate $j$ in $D$ and
   is called the set of \emph{dominators} of $j$.

 \item $\id{top}(D)$, the set of maximal elements of $D$.
   $\id{top}(D)(j)$ means that $j\in \id{top}(D)$.  This is
   a derived internal state of the hub controller.

 \item $\oc:V\rightarrow \id{Cmd}$, the command map.  This
   is part of the internal state of the hub controller and
   also its output.  
\ee

\paragraph{Additional Notation} 
Let $s\in A$, and $\oa$ be an activity map.
Then $E_{s}(\oa)(j)$ denotes the set of neighbours of $j$
whose activity value is $s$.  Likewise $D_{s}(\oa)(j)$
denotes the set of vertices in the subordinate set of $j$
whose activity status is $s$.  

\subsection{Informal introduction to the control algorithm}

Initially, at cycle $t=0$, all vertices in $G=(V,E)$ are
thinking, so $\oa[0]=\ot$.  Also, $D[0]$ is $D^{0}$,
$\id{top}(D)[0] = \{j\ |\ D^{0}(j) = E(j)\}$ and $\oc[0]=\opass$.

Upon reading the activity map, the controller performs the
following sequence of computations:

 \be

\item (Step 1): Updates $D$ so that (a) a vertex that is
  eating is dominated by all its neighbours, and (b) any
  hungry vertex also dominates its thinking neighbours. 

\item (Step 2): Computes $\id{top}$, the set of top
  vertices.

   \item (Step 3): Computes the new control input for each
     philosopher vertex: A thinking or eating vertex is
     allowed to pass.  A hungry vertex that is at the top
     and has no eating neighbours is commanded to switch to
     eating.  Otherwise, the vertex is commanded to stay
     hungry.  
\ee

\subsection{Formal structure of the Hub controller}

The centralised or hub controller is a deterministic system 
$H=\pair{X, X^{0}, U, f, Y, h}$, where
\be
  \item $X_H=(E\rightarrow \Bool) \times (V\rightarrow
    \id{Cmd})$ is the cross product of the set of all
    priority maps derived from $E$ with the set of command
    maps on the vertices of $G$.  Each element $x_H:X_H$ is
    a tuple $(D, \oc)$ consisting of a priority map $D$ and
    a command map $\oc$.

  \item $X^{0}_H=(D^{0},\oc^{0})$ where $D^{0}(\set{j,k}) =
    j\mapsto k$ if $j>k$ and $k\mapsto j$ otherwise for
    $\set{j,k}\in E$, and $\oc^{0}(j) = \opass$.  Note that
    $D^{0}$ is acyclic.

  \item $U_H$ is the set of activity maps.  $\oa:U_H$
    represents the activity map that is input to the hub $H$.

  \item $f_H:X,U\rightarrow X$ takes a priority map $D$, a
    command map $\oc$, and an activity  map $\oa$ as input
    and returns a new priority map $D'$ and a new command
    map.

$f_H((D,\oc),\oa) = (D', g_H(D',\oa))$  where
\begin{align}
D'                    &= d_H(D,\oa)\\
d_H(D,\oa)              &\ida \set{d_H(d,\oa)\ |\ d \in D}\\
d_H(j\mapsto k,\oa)     &\ida (k\mapsto j), \quad \mbox{if $\oa(j)=\e$}\label{eqn:dh1}\\
                      &\ida (k\mapsto j), \quad \mbox{if $\oa(j)=\t$\ and\ $\oa(k)=\h$}\label{eqn:dh2}\\
                      &\ida (j\mapsto k), \quad \mbox{otherwise}\label{eqn:dh3}
\end{align}

Note that the symbol $d_H$ is overloaded to work on a
directed edge as well as a priority map.  $d_H$ implements
the updating of the priority map $D$ to $D'$ mentioned in
(Step 1) above.  The function $g_H$ computes the command map
(Step 3).  The command is \id{\pass} if $j$ is either eating
or thinking.  If $j$ is hungry, then the command is
\overone if $j$ is ready, i.e., it is hungry, at the top
(Step 2), and its neighbours are not eating.  Otherwise, the
command is \overzero.

\begin{align}
   g_H(D,\oa)(j)    &\ida \pass, \quad \mbox{if $\oa(j)\in \set{\t,\e}$}\label{eqn:g_H}\\
                    &\ida \overone, \quad \mbox{if $\id{ready}(D,\oa)(j)$} \\
                    &\ida \overzero, \quad \mbox{otherwise} \\ \nonumber \\
  \id{ready}(D,\oa)(j) \ida \true, \quad \mbox{if } & \oa(j)=\h\ \land\\
                             & j\in \id{top}(D) \ \land \nonumber \\
                             & \forall k\in
  E(j):\ \oa(k)\neq\e \nonumber \\ \nonumber \\
  \id{top}(D)        &\ida   \set{j\in V\| \forall k\in E(j): j\mapsto k}
 \end{align}

\item $Y_H = V\rightarrow \id{Cmd}$: The output is a command
  map.

\item $h_H:X_H\rightarrow Y_H $ simply projects the command
  map from its state: $h_H(D,\oc) \ida \oc$.

\ee 

  Note that an existing priority map $D$ when combined with
  the activity map results in a new priority map $D'$.  The
  new map $D'$ is then passed to $g_H$ in order to compute
  the command map.

The first important property concerns the priority map
update function. 
\begin{lemma}[$d_H$ is idempotent]
\label{lem:dh-idempotent}
$d_H(D,\oa)=d_H(d_H(D,\oa),\oa)$.
\end{lemma}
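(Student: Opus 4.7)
The plan is to reduce the identity at the level of priority maps to an identity at the level of individual directed edges. Since $d_H(D,\oa) \ida \{d_H(d,\oa)\mid d\in D\}$, and the activity map $\oa$ is unchanged between the inner and outer applications, it suffices to show that for every directed edge $e = j\mapsto k$ we have $d_H(d_H(e,\oa),\oa) = d_H(e,\oa)$. The set equality at the priority-map level then follows by closing this under the image-map construction.

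For each directed edge I would perform a case analysis on the pair $(\oa(j),\oa(k))\in\Act\times\Act$, following the three defining clauses \cref{eqn:dh1}--\cref{eqn:dh3}. When \cref{eqn:dh3} applies the edge is unchanged, so the second application trivially agrees. When \cref{eqn:dh2} applies (so $\oa(j)=\t$, $\oa(k)=\h$), the edge flips to $k\mapsto j$; re-checking the clauses on this flipped edge, \cref{eqn:dh1} would require $\oa(k)=\e$ and \cref{eqn:dh2} would require $\oa(k)=\t$, both of which fail since $\oa(k)=\h$, so \cref{eqn:dh3} keeps it fixed. When \cref{eqn:dh1} applies (so $\oa(j)=\e$), the edge flips to $k\mapsto j$; here \cref{eqn:dh2} on the flipped edge needs $\oa(j)=\h$ and fails (since $\oa(j)=\e$), leaving only \cref{eqn:dh1}, which fires precisely when $\oa(k)=\e$.

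The main obstacle, and the one interesting point in an otherwise mechanical case split, is exactly the subcase $\oa(j)=\oa(k)=\e$: here the second application of \cref{eqn:dh1} flips the edge back and idempotence fails. This is precisely a safety violation (two adjacent vertices both eating), so I expect the proof to either add the safety precondition as an explicit hypothesis to the lemma, or to justify its exclusion by appealing to safety as an invariant of the reachable states of the composite $N$-diner system. Once that single pathological configuration is ruled out, all remaining subcases collapse into the routine verification above, and collecting them edge-by-edge yields the stated set-level identity.
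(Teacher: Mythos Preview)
Your edge-level case analysis is exactly the right way to unpack the paper's proof, which is nothing more than the one-line remark ``the proof is a simple consequence of the definition of $d_H$.'' So at the level of method you and the paper agree; you have simply done the work the paper waves away.

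More importantly, you have put your finger on a genuine defect in the \emph{statement} of the lemma that the paper does not acknowledge. Your observation is correct: if $\oa(j)=\oa(k)=\e$ for adjacent $j,k$, then clause~\labelcref{eqn:dh1} fires on $j\mapsto k$ to give $k\mapsto j$, and fires again on $k\mapsto j$ to give $j\mapsto k$, so $d_H$ is not idempotent on that edge. The lemma as written is therefore false without a side condition ruling out adjacent eaters. The paper never states such a hypothesis, and its sole use of the lemma (in deriving the polled dynamics, \cref{eqn:D-2t}) occurs \emph{before} safety is established in \cref{thm:safety-invariants}, so an appeal to safety-as-invariant at that point would be at best informal and at worst circular. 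Your instinct to flag this and to suggest adding $\id{safe}(E,\oa)$ as an explicit hypothesis is the right repair; with that hypothesis in place your case split goes through cleanly and is a complete proof.
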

\begin{proof}
The proof is a simple consequence of the definition of
$d_H$. 
\end{proof}

\subsection{Composing the hub controller with the
  Philosophers}
\label{subsec:hub-composition}

Consider the interconnect $\I$ between the hub $H$ and the
$N$ philosopher instances $s_j$, $1\leq j\leq N$.

\[\I\subseteq X_H\times U_H\times \Pi_{j=1}^{N}s_j.X\times s_j.U\] 

\noindent that connects the output of each philosopher to
the input of the hub, and connects the output of the hub to
control input of the corresponding philosopher.

\begin{align*}
\I = & \{(x_H, u_H, s_1.x, s_1.u \ldots s_n.x,s_n.u)\ |\\
   & u_H(j) = h_S(s_j.x)\ \land\ h_H(x_H)(j) = s_j.u,\\
   & 1\leq j\leq N\}
\end{align*}

The composite N Diners system is the product of the N+1
systems.  

We assume that the composite system is synchronous and
driven by a global clock.  At time $i$, the activity map
$\oa[i]$ holds the $j$th philosopher's activity at
$\oa[i](j)$.  All the philosophers make their choice inputs
at the same instant and the choice inputs alternate with the
$\bot$\ inputs.  Without loss of generality, we assume that
the controller takes one clock cycle to compute the control
input.

The dynamics of the entire system may be described by the
following system of equations:

\textbf{Initialisation:}

\begin{align*}
D[0]   &= D^{0}\\
\oc[0] &= \opass\\
\oa[0] &= \ot\\
\obb[0] &= \obot\\
\end{align*}

\textbf{Next state functions:}
\begin{align}
D[t+1]   &= d_H(D[t],\oa[t])\label{eqn:Dtp1}\\
\oc[t+1] &= g_H(D[t+1], \oa[t])\label{eqn:octp1}\\
\oa[t+1] &= f_S(\oa[t], \obb[t], \oc[t])\label{eqn:oatp1}
\end{align}

Using the prime (') notation, these may be rewritten as
\begin{align}
D'       &= d_H(D,\oa)\label{eqn:pDtp1}\\
\oc'     &= g_H(D', \oa)\label{eqn:poctp1}\\
\oa'     &= f_S(\oa, \obb, \oc)\label{eqn:poatp1}
\end{align}

The input to the system, the lifted choice map $\obb$
alternates between $\obot$ at time $2t$ and a choice map
$\ob$ at time $2t+1$.

Given $\obb[0]=\obot$, it is easy to verify that 
\begin{align*}
D[1]     &= D[0]   =D^{0}\\
\oc[1]   &= \oc[0] =\opass\\
\oa[1]   &= \oa[0] =\ot
\end{align*}

Tracing the dynamics from time $2t$ to $2t+3$, we have:

\begin{align*}
\oa[2t+1] &= f_S(\oa[2t], \obb[2t], \oc[2t])\\
          &= f_S(\oa[2t], \obot[2t], \oc[2t])& \mbox{From
  the defn. of $\obb[2t]$} \\
          &=    \oa[2t] \numberthis\label{eqn:os-2t}
\end{align*}

\begin{align*}
D[2t+1]  &= d_H(D[2t], \oa[2t])\numberthis\label{eqn:D-2tp1}
\end{align*}

\begin{align*}
\oc[2t+1] &= g_H(D[2t+1], \oa[2t])\\
          &= g_H(D[2t+1], \oa[2t+1]) &\mbox{From \cref{eqn:os-2t}}\numberthis\label{eqn:oc-2tp1}
\end{align*}

\begin{align*}
\oa[2t+2] &= f_S(\oa[2t+1], \obb[2t+1], \oc[2t+1])\\
          &= f_S(\oa[2t+1], \ob[2t+1], \oc[2t+1])&
\mbox{From the defn. of $\obb[2t+1]$}\numberthis\label{eqn:os-2tp2}
\end{align*}

\begin{align*}
D[2t+2]   &= d_H(D[2t+1], \oa[2t+1])\\
          &= d_H(D[2t+1], \oa[2t]) &\mbox{From \cref{eqn:os-2t}}\\
          &= D[2t+1] &\mbox{From \labelcref{eqn:D-2tp1} and idempotence of $d_H$}\numberthis\label{eqn:D-2t}
\end{align*}

\begin{align*}
\oc[2t+2] &= g_H(D[2t+2], \oa[2t+1])\\
          &=g_H(D[2t+2], \oa[2t])&\mbox{From \cref{eqn:os-2t}}\\
          &=g_H(D[2t+1], \oa[2t])&\mbox{From \cref{eqn:D-2t}}\\
          &=\oc[2t+1]\numberthis\label{eqn:oc-2tp2}
\end{align*}

\begin{align*}
\oa[2t+3] &= f_S(\oa[2t+2], \obb[2t+2], \oc[2t+2])\\
          &= f_S(\oa[2t],   \obot, \oc[2t+1])\\
          &= \oa[2t+2] & \mbox{From the defn. of $f_S$ (\cref{eqn:f_S-eq1})}\numberthis\label{eqn:os-2tp3-2tp2}\\
          &= f_S(\oa[2t+1],   \ob[2t+1], \oc[2t+1]) &
\mbox{From
  \cref{eqn:os-2tp2}}\numberthis\label{eqn:os-2tp3}
\end{align*}

\begin{align*}
D[2t+3]  &= d_H(D[2t+2], \oa[2t+2])\\
         &= d_H(D[2t+1], \oa[2t+3]) & \mbox{From \cref{eqn:os-2tp3-2tp2}}\\\\
\oc[2t+3] &= d_H(D[2t+3], \oa[2t+2])\\
          &= d_H(D[2t+3], \oa[2t+3]) & \mbox{From \cref{eqn:os-2tp3-2tp2}}
\end{align*}

From this we conclude the following, for $t\in \N$.

\begin{align*}
\obb[2t+1] &= \ob[2t+1] &\mbox{Assumption}\\
\oc[2t+1] &= g_H(D[2t+1],\oa[2t+1])&\mbox{Ref. \cref{eqn:oc}}\\\\
\oa[2t+3] &= f_S(\oa[2t+1],   \ob[2t+1], \oc[2t+1])&\mbox{Ref. \cref{eqn:os-2tp3}}\\
D[2t+3]   &= d_H(D[2t+1], \oa[2t+3])&\mbox{Ref. \cref{eqn:os-2tp3-2tp2}}
\end{align*}

\subsection{Simplified dynamics by polling}
\label{subsec:simplified-dynamics}
The dynamics may be reduced to a simpler system of equations
if we consider polling the system once every two clock
cycles.  We consider a new clock of twice the time period.
The index variable $i$ refers to the newer clock.  The
relation between the new set of variables $[\oa_2, \ob_2,
  \oc_2, D_2]$ and  the previous variables is shown below:

\begin{align*}
\oa_2[i]   &= \oa[2i+1]\\
\ob_2[i]   &= \obb[2i+1]\\
\oc_2[i]   &= \oc[2i+1]\\
D_2[i]     &= D[2i+1]
\end{align*}

and 

\begin{align*}
\oa_2[0]   &= \oa[1] =\ot\\
\oc_2[0]   &= \oc[1] =\opass\\
D_2[0]     &= D[1]   =D^{0}\\
\ob_2[0]   &= \obb[1] =\ob[1]\\
\end{align*}

To continue using the old variables, we abuse notation and
write $\oa$ etc., to refer to $\oa_2$, etc.  Thus the
dynamics based on the new clock with ticks indicated by $i$
is shown below:

%% s[i+1] = f(s[i], b[i], c[i])
%% c[i+1] = g(c[i], s[i+1])

\begin{align*}
\oa[0]   &= \ot\\
D[0]     &= D^0\\
\oc[0]   &= \opass\\
\oc[i]   &= g_H(D[i],\oa[i])\\
\oa[i+1] &= f_S(\oa[i], \ob[i], \oc[i])\\
D[i+1]   &= d_H(D[i], \oa[i+1])\\
\end{align*}

We simplify notation further by making the indexing with $i$
implicit and writing $\oa$ to mean $\oa[i]$ and $\oa'$ to
denote $\oa[i+1]$.  Thus

\begin{align}
\oa^0    &= \ot\label{eqn:oaz}\\
\oc^0    &= \opass\label{eqn:ocz}\\
D^0      &= \set{j\mapsto k\ |\ E(j,k) \land j > k}\label{eqn:Dz}\\[2em]
\oc      &= g_H(D,\oa)\label{eqn:oc}\\
\oa'     &= f_S(\oa, \ob, \oc)\label{eqn:oap}\\
D'       &= d_H(D, \oa')\label{eqn:Dp}
\end{align}

%% i:         i               i+1             i+2
%% t:  2t    2t+1   2t+2     2t+3    2t+4    2t+5
%% a:  a0    a0     a1        a1      a2     a2
%% m:  nil   b0     nil       b1     nil     b2
%% c:  c0    c0     c0        c1      c1     c2
%% D:  D0    D0     D0        D1      D1     D2

%% i:  i             i+1              i+2      
%% t:  2t+0  2t+1   2t+2     2t+3    2t+4
%% a:  a0    a1     a1        a2      a2 
%% m:  b0    nil    b1       nil      b2
%% c:  c0    c0     c1        c1      c1
%% D:  D0    D0     D1        D1      D1

%% s'[0] = s[1]
%% b'[0] = b[1]
%% c'[0] = c[1]

%% s'[1] = s[3]
%% b'[1] = b[3]
%% c'[1] = c[3]

%% s'[i] = s[2i+1]
%% b'[i] = b[2+1]
%% c'[i] = c[2i+1]
%% D'[i] = D[2i+1]
  
%% s0=think
%% c0 = g(p,  s0)

%% s1 = f(s0, b0, c0)
%% c1 = g(c0, s1)

%% s2 = f(s1, b1, c1)
%% c2 = g(c1, s2)

%% s3 = f(s2, b2, c2)
%% c3 = g(c2, s3)

%% s0 = think
%% c0 = pass

%% s[i+1] = f(s[i], b[i], c[i])
%% c[i+1] = g(c[i], s[i+1])

\Crefrange{eqn:oaz}{eqn:Dp} completely capture the 'polled
dynamics' of the composite system consisting of the hub
controller with the N Diners.  This dynamics is obtained by
polling all odd instances of the clock, which is precisely
when and only when the choice input is present.  With the
polled dynamics, we are no longer concerned with $\bot$ as
as a choice input.

It is worth comparing the polled dynamics with the basic
clocked dynamics of \crefrange{eqn:pDtp1}{eqn:poatp1}.
Note, in particular, the invariant that relates $\oa$, $\oc$
and $D$ in \cref{eqn:oc} of the polled dynamics.  There is
no such invariant in the basic clocked dynamics.
\Cref{eqn:oap} of the polled dynamics may be seen as a
specialisation of the corresponding \cref{eqn:poatp1} of the
basic clocked dynamics.  However, while \cref{eqn:Dp} of the
polled dynamics relates $D'$ ($D$ in the next step) with $D$
and $\oa'$, its counterpart \cref{eqn:pDtp1} in the basic
clocked dynamics relates $D'$ ($D$ in the next cycle) with
$D$ and $\oa$.

\subsection{Asynchronous interpretation of the dynamics}
\label{subsec:async}

It is worth noting that the equations we obtained in the
polled dynamics of the system can be interpreted as
asynchronous evolution of the philosopher system.  A careful
examination of the equations yields temporal dependencies
between the computations of the variables involved in the
systems.  Consider the polled equations, consisting of
indexed variables $\oa$, $\oc$ and $D$:

\begin{align*}
\oa[0]   &= \ot\\
D[0]     &= D^0\\
\oc[i]   &= g_H(D[i],\oa[i])\\
\oa[i+1] &= f_S(\oa[i], \ob[i], \oc[i])\\
D[i+1]   &= d_H(D[i], \oa[i+1])\\
\end{align*}

The asynchronous nature of the system dynamics tells us that
the $i^{th}$ value of $\oc$ requires the $i^{th}$ values of
$\oa$ and $D$ to be computed before its computation happens,
and so on.  This implicitly talks about the temporal
dependency of the $i^{th}$ value of $\oc$ on the $i^{th}$
values of $\oa$ and $D$.  Similarly, the $(i+1)^{th}$ value
of $\oa$ depends on the $i^{th}$ values of $\oa$, $\oc$ and
$D$, and the $(i+1)^{th}$ value of $D$ depends on the
$i^{th}$ value of $D$ and the $(i+1)^{th}$ value of $\oa$.
Note that they only talk about the temporal dependencies
between variable calculations, and do not talk about the
clock cycles, nor when the values are computed in physical
time.  The following figure depicts the dependencies between
the variables.

\begin{figure}[H]
\caption{Dependencies between $\oa$, $\oc$ and $D$, along
  with input $\ob$, shown for three calculations.\label{fig:temporal-dependencies}}
\centering{
 \includegraphics[width=5in]{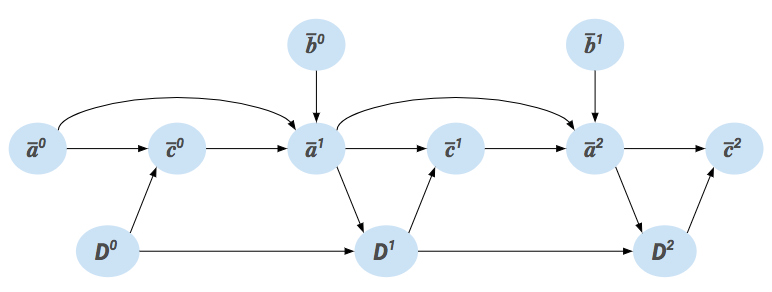}
}
\end{figure}

\subsection{Basic properties of asynchronous dynamics}
\label{subsec:polled-dynamics}

In the next several lemmas, we study the asynchronous (or
polled) dynamics in detail.  All of these are simple
consequences of the functions $g_H$ and $f_S$.  The first of
several lemmas in this effort assures us that the
asynchronous dynamics obeys the laws governing the dynamics
of basic philosopher activity:
\begin{lemma}[Asynchronous Dynamics]
\label{lem:t-next}
Let $k\in V$.  The dynamics satisfies the following:
\be
 \item If $\oa(k)=\t$, then $\oa'(k)\in \set{\t,\h}$.
 \item If $\oa(k)=\h$, then $\oa'(k)\in \set{\h,\e}$.
 \item If $\oa(k)=\e$, then $\oa'(k)\in \set{\e,\t}$.
\ee
\end{lemma}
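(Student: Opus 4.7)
The plan is to unfold the polled dynamics equation $\oa' = f_S(\oa, \ob, \oc)$ pointwise at vertex $k$ and observe that, regardless of what the controller's command $\oc(k)$ happens to be, the next activity of vertex $k$ must equal one of $\id{stay}(\oa(k))$ or $\id{switch}(\oa(k))$. Once that reduction is in hand, the three cases of the lemma follow by reading off the definition of $\id{switch}$.

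First I would note that in the polled dynamics we have effectively projected out the $\bot$ inputs, so $\ob(k) \in B$ is an honest bit and clause \labelcref{eqn:f_S-eq1} of $f_S$ never applies. Thus $f_S(\oa(k), \ob(k), \oc(k))$ reduces to $f_M(\oa(k), \ob(k), \oc(k))$, which by the two clauses defining $f_M$ equals $f_P(\oa(k), b)$ for some $b \in B$ (namely $b = \ob(k)$ when $\oc(k) = \pass$, and $b$ extracted from the command when $\oc(k) = \override\, b$). Then by \labelcref{eqn:f_P-eq1} and \labelcref{eqn:f_P-eq2}, $f_P(\oa(k), b) \in \{\id{stay}(\oa(k)), \id{switch}(\oa(k))\} = \{\oa(k), \id{switch}(\oa(k))\}$.

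With this reduction in place, the three cases of the lemma are immediate from the definition of $\id{switch}$: $\id{switch}(\t) = \h$, $\id{switch}(\h) = \e$, $\id{switch}(\e) = \t$. So if $\oa(k) = \t$ then $\oa'(k) \in \{\t, \h\}$; if $\oa(k) = \h$ then $\oa'(k) \in \{\h, \e\}$; if $\oa(k) = \e$ then $\oa'(k) \in \{\e, \t\}$.

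There is no real obstacle here; the only thing one must be a bit careful about is to notice that the command value $\oc(k)$, which in turn depends on $g_H$ and the priority map $D$, never matters for this statement, because both the $\pass$ clause and the $\override\, b$ clause of $f_M$ route through $f_P$ and hence produce either a stay or a switch. That observation is what makes the lemma a purely local consequence of the philosopher's own state diagram, independent of the controller's logic.
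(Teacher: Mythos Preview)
Your argument is correct. Both your proof and the paper's proceed by unfolding the polled dynamics $\oa' = f_S(\oa,\ob,\oc)$ at vertex $k$, but there is a small conceptual difference worth noting. The paper explicitly substitutes $\oc(k) = g_H(D,\oa)(k)$ and computes the specific command in each case (e.g., $\pass$ when $\oa(k)=\t$), then pushes through $f_S$ and $f_P$. You instead observe that \emph{whatever} command $\oc(k)$ the controller issues, the two clauses of $f_M$ both route through $f_P$ with some bit, so the result is always in $\{\id{stay}(\oa(k)),\id{switch}(\oa(k))\}$; the controller logic is irrelevant to this lemma. Your route is slightly cleaner in that it decouples the conclusion from $g_H$ entirely, making transparent that the lemma is really about the philosopher's local transition structure, while the paper's route has the minor advantage of rehearsing exactly which command is actually issued in each case (information that is used in later lemmas anyway).
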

\begin{proof}

This is a consequence of the dynamics $\oa'=f_S(\oa,
\ob,\oc)$ and simply substituting the definitions of $\oc$
and the value of $\oa(k)$.  We show the case when
$\oa(k)=\t$.  The others are similar.

\begin{align*}
\oa'(k) &= f_S(\oa, \ob,\oc)(k)\\
        &= f_S(\t, \ob(k),\oc(k))\\
        &= f_S(\t, \ob(k), g_H(D,\oa)(k))\\
        &= f_S(\t, \ob(k),\pass)   &\mbox{From the defn. of
  $g_H$ (\cref{eqn:g_H})}\\
        &= f_P(\t, \ob(k)) &\mbox{From the defn. of $f_S$ (\cref{eqn:f_S-eq2})}\\
        &= \t,  &\mbox{if $\ob(k)=0$, or}\\
        &= \h,  &\mbox{if $\ob(k)=1$}
\end{align*}
\end{proof}

The next lemma invests meaning to the phrase ``priority map''.
If $j$ has higher priority than a hungry vertex $k$,
irrespective of whether $j$ stays hungry or switches to
eating, in the next step, $k$ will continue to wait in the
hungry state.
\begin{lemma}[Priority map]
\label{lem:D-dom}
If $j\mapsto k\in D$ and $\oa(k)=\h$, then $\oa'(k)=\h$. 
\end{lemma}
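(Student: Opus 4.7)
The plan is to trace through the definitions of $g_H$ and $f_S$ to pin down what happens at vertex $k$. First, I would unpack what $j\mapsto k\in D$ tells us structurally: the definition of $\id{top}(D)=\{\,l\in V\mid \forall m\in E(l): l\mapsto m\,\}$ requires a top vertex to dominate \emph{all} its neighbours. Since $j$ is a neighbour of $k$ and $j\mapsto k\in D$ (rather than $k\mapsto j$), the vertex $k$ fails this condition, so $k\notin \id{top}(D)$.

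Next, I would compute $\oc(k)$ using \cref{eqn:oc}, i.e.\ $\oc=g_H(D,\oa)$. The three clauses defining $g_H$ split on $\oa(k)$ and on $\id{ready}(D,\oa)(k)$. Since $\oa(k)=\h$, the first clause of $g_H$ (which handles $\t$ and $\e$) does not apply. The second clause requires $\id{ready}(D,\oa)(k)$, which in turn requires $k\in \id{top}(D)$; by the previous paragraph this fails. Hence the third clause applies and $\oc(k)=\overzero$.

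Finally, I would compute $\oa'(k)$ from \cref{eqn:oap}: $\oa'(k)=f_S(\oa(k),\ob(k),\oc(k))=f_S(\h,\ob(k),\overzero)$. Because the command is of the form $\override\;0$, \cref{eqn:f_S-eq3} applies and yields $f_P(\h,0)$, which by \cref{eqn:f_P-eq1} equals $\id{stay}(\h)=\h$. Therefore $\oa'(k)=\h$, as required.

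There is no real obstacle here beyond book-keeping: the entire argument is a one-shot evaluation of $g_H$ followed by $f_S$. The only place one has to be careful is the very first step, where one must recognise that the mere existence of a dominator $j$ of $k$ suffices to exclude $k$ from $\id{top}(D)$; the rest is a mechanical walk through the defining clauses, with the override command $\overzero$ doing the work of forcing $k$ to remain hungry regardless of the choice input $\ob(k)$.
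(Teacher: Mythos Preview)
Your proposal is correct and follows essentially the same approach as the paper's proof: compute $\oc(k)=g_H(D,\oa)(k)=\overzero$ because $\id{ready}(D,\oa)(k)$ fails, then evaluate $f_S(\h,\ob(k),\overzero)=f_P(\h,0)=\h$. You are slightly more explicit than the paper in spelling out \emph{why} $\id{ready}$ fails (namely, that $j\mapsto k\in D$ prevents $k\in\id{top}(D)$), which the paper leaves implicit.
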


\begin{proof}
\begin{align*}
\oa'(k) &= f_S(\h, \ob(k),\oc(k))\\
        &= f_S(\h, \ob(k), g_H(D,\oa)(k))\\
        &= f_S(\h, \ob(k),\overzero)    &\mbox{Since $\id{ready}(D,\oa)(k)$ is false}\\
        &= f_P(\h, 0) &\mbox{From the defn. of $f_S$
(\cref{eqn:f_S-eq3})}\\
        &= \h  &\mbox{From the defn. of $f_P$ (\cref{eqn:f_P-eq1})}\\
\end{align*}
\end{proof}

The next lemma states that a hungry vertex with an eating
neighbour stays hungry in the next step,
irrespective of the eating vertex finishing eating in the
next step or not.  This lemma drives the safety
invariant (described later) that ensures that no two
adjacent vertices eat at the same time.

\begin{lemma}[Continue to be hungry if neighbour eating]
\label{lem:h-if-n-eats}
If $E(j,k)$,  $\oa(j)=\h$ and $\oa(k)=\e$ then $\oa'(j)=\h$. 
\end{lemma}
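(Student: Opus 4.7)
The plan is to trace through the same substitution chain used in the proofs of the two preceding lemmas, only this time showing that the ``override-stay'' branch of $f_S$ is activated. The key is to identify which clause of $g_H$ fires for vertex $j$.

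First, I would observe that $\id{ready}(D,\oa)(j)$ must evaluate to false. This is immediate from its definition: one of its conjuncts is $\forall k'\in E(j):\ \oa(k')\neq\e$, but we are explicitly given a neighbor $k\in E(j)$ with $\oa(k)=\e$. Since $\oa(j)=\h$, the first clause of $g_H$ (for $\oa(j)\in\set{\t,\e}$) does not apply either, so $g_H(D,\oa)(j)=\overzero$ by \cref{eqn:g_H} (falling through to the ``otherwise'' branch).

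With this identification of $\oc(j)$, the rest is a direct computation analogous to \Cref{lem:D-dom}:
\begin{align*}
\oa'(j) &= f_S(\oa,\ob,\oc)(j)\\
        &= f_S(\h,\ob(j),\overzero)\\
        &= f_P(\h,0) &\mbox{by \cref{eqn:f_S-eq3}}\\
        &= \h        &\mbox{by \cref{eqn:f_P-eq1}.}
\end{align*}

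There is no real obstacle here; the only subtle point is to notice that the priority map $D$ plays no role in this lemma's conclusion. The result holds regardless of whether $j\mapsto k$ or $k\mapsto j$ in $D$, because the presence of even one eating neighbor alone is sufficient to defeat the $\id{ready}$ predicate and force a $\overzero$ command. This mirrors the role played by \Cref{lem:D-dom}, where priority dominance was the defeating cause; together the two lemmas cover the two independent reasons the controller ever issues $\overzero$ to a hungry vertex, and this split will be what feeds into the safety invariant later.
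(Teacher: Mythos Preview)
Your proof is correct and follows essentially the same substitution chain as the paper's own proof; in fact you are slightly more explicit than the paper in spelling out \emph{why} $\id{ready}(D,\oa)(j)$ is false (the paper just asserts it). Your closing remark that $D$ plays no role here is a nice observation not made in the paper.
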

\begin{proof}
\begin{align*}
\oa'(j) &= f_S(\oa(j), \ob(j),\oc(j))\\
        &= f_S(\h, \ob(j), g_H(D,\oa)(j))\\
        &= f_S(\h, \ob(j),\overzero)    &\mbox{Since $\id{ready}(D,\oa)(j)$ is false}\\
        &= f_P(\h, 0) &\mbox{From the defn. of $f_S$
(\cref{eqn:f_S-eq3})}\\
        &= \h  &\mbox{From the defn. of $f_P$ (\cref{eqn:f_P-eq1})}
\end{align*}
\end{proof}

\subsection{Safety and other invariants}
\label{subsec:safety-invariants}

\begin{theorem}
\label{thm:safety-invariants}

The dynamics of the composition of the N philosophers with
the hub controller satisfies the following invariants:

  \be

  \item\label{itm:e-sink} \textbf{Eaters are sinks:} If
    $\oa(k)=\e$ and $E(j,k)$, then $D(j,k)$.

  \item\label{itm:h-dom-t} \textbf{Hungry dominate
    thinkers:} If $\oa(j)=\h$ and $\oa(k)=\t$, and $E(j,k)$,
    then $D(j,k)$.

  \item\label{itm:safe} \textbf{Safety:} $\id{safe}(E,\oa)$:
    $\oa(j)=\e$ and $E(j,k)$ implies $\oa(k)\neq\e$.  
\ee

\end{theorem}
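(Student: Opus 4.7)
The plan is to prove all three invariants simultaneously by induction on the polled step index $i$. The base case is immediate: at $i=0$, $\oa = \ot$ and $D = D^0$, so no vertex is eating or hungry and the three statements hold vacuously. For the inductive step, I assume the invariants hold for $\oa$ and $D$ and prove them for $\oa'$ and $D'$. The order I follow is: first prove safety, and then derive ``eaters are sinks'' and ``hungry dominate thinkers'' as direct consequences of the definition of $d_H$ applied to $\oa'$.

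For safety, fix $E(j,k)$ and suppose $\oa'(j) = \e$. By \cref{lem:t-next}, $\oa(j) \in \set{\h,\e}$. In the case $\oa(j) = \e$, the inductive hypothesis on safety gives $\oa(k) \in \set{\t,\h}$; then \cref{lem:t-next} handles $\oa(k) = \t$ and \cref{lem:h-if-n-eats} (with the roles of $j$ and $k$ swapped) handles $\oa(k) = \h$, in both cases yielding $\oa'(k) \neq \e$. In the case $\oa(j) = \h$, the controller can only promote $j$ to eating when $\id{ready}(D,\oa)(j)$ holds, which by definition forces $j \in \id{top}(D)$ and $\oa(k) \neq \e$ for every neighbour $k$; the $\oa(k) = \t$ sub-case is then closed by \cref{lem:t-next}, and the $\oa(k) = \h$ sub-case by \cref{lem:D-dom}, since $j \in \id{top}(D)$ ensures $j \mapsto k \in D$.

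With safety at step $i+1$ in hand, ``eaters are sinks'' follows by inspecting the clauses of $d_H$. Whichever orientation of the $\set{j,k}$ edge lies in $D$, safety guarantees that at most one endpoint has $\oa' = \e$, so the applicable clause of $d_H$ (either \cref{eqn:dh1} when the source is eating, or \cref{eqn:dh3} in the non-eating orientation with $\oa'(k) = \e$) produces $(j \mapsto k) \in D'$. The ``hungry dominate thinkers'' invariant is even more mechanical: if $\oa'(j) = \h$ and $\oa'(k) = \t$, then starting from $(j \mapsto k) \in D$ clause \cref{eqn:dh3} preserves the edge, and starting from $(k \mapsto j) \in D$ clause \cref{eqn:dh2} flips it to $(j \mapsto k)$.

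The main obstacle is the sub-case in safety where $j$ transitions from hungry to eating: one must argue that the readiness predicate not only rules out currently eating neighbours, but that \cref{lem:D-dom} also rules out currently hungry neighbours from simultaneously racing to the eating state. The other sub-cases reduce to routine substitution into $f_S$ and $g_H$.
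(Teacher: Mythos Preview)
Your proof is correct and structurally the same induction as the paper's, but with two tactical differences worth recording.

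First, you reverse the order: you establish safety at step $i{+}1$ first and then read off ``eaters are sinks'' and ``hungry dominate thinkers'' directly from the clauses of $d_H$ applied to $\oa'$. The paper proves the three items in the listed order and its argument for ``eaters are sinks'' appeals only to clause~\labelcref{eqn:dh1}; strictly speaking that appeal is incomplete unless one already knows $\oa'(j)\neq\e$ (otherwise, if $j\mapsto k\in D$ and $\oa'(j)=\e$, clause~\labelcref{eqn:dh1} would flip the edge the wrong way). Your ordering makes this dependency explicit.

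Second, in the safety sub-case $\oa(j)=\h$ with $\oa'(j)=\e$, you argue directly from the readiness predicate: since $g_H$ never issues $\pass$ to a hungry vertex, $\oa'(j)=\e$ forces $\oc(j)=\overone$ and hence $\id{ready}(D,\oa)(j)$, giving both $j\in\id{top}(D)$ and $\oa(k)\neq\e$ for every neighbour $k$ in one stroke. The paper instead case-splits on $\oa(k)$ and, for $\oa(k)=\e$, invokes the ``eaters are sinks'' invariant at step $i$ together with \cref{lem:h-if-n-eats} to derive a contradiction; for $\oa(k)=\h$ it further splits on the orientation of the edge in $D$. Your route is shorter and, notably, does not need the ``eaters are sinks'' induction hypothesis at all in the safety argument.
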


\begin{proof}

The proof is by induction on $i$.
\be

 \item \textbf{Eaters are sinks:} 

  The base case is vacuously true since $\oa^{0}=\ot$.

  For the inductive case, assume $\oa'(k)=\e$ and $E(j,k)$,
  we need to show that $D'(j,k)$.  This follows from the
  definition $D'=d_H(D,\oa')$ and from the definition of
  $d_H$ (clause \labelcref{eqn:dh1}).

 \item \textbf{Hungry dominate thinkers:}

  The proof of this claim is similar to that of the previous
  claim.

  The base case is vacuously true since $\oa^{0}=\ot$ (there
  are no hungry nodes). 

  For the inductive case, assume $\oa'(j)=\h$ and $E(j,k)$
  and $\oa'(k)=\h$, we need to show that $D'(j,k)$.  Now,
  $D'=d_H(D,\oa')$.  From the definition of $d_H$ (clause
  \labelcref{eqn:dh2}), it follows that $j\mapsto k\in D'$,
  i.e., $D'(j,k)$.

 \item \textbf{Safety:}

 The base case is trivially true since $\oa^0=\ot$,
  i.e., all vertices are thinking. 

  For the inductive case, we wish to show that
  $\id{safe}(E,\oa)$ implies $\id{safe}(E,\oa')$, where
  \[\oa' = f_P(\oa, \ob, \oc)\]

  Let $E(j,k)$.  Let $\oa'(j)=\e$.  In each case, we prove
  that $\oa'(k)\neq \e$.

\be

  \item $\oa(j)=\t$:  By Lemma~\labelcref{lem:t-next},
    $\oa'(j)\neq \e$.   This violates the assumption that $\oa'(j)=\e$.

  \item $\oa(j)=\e$:  There  are three cases:
   \be

    \item\label{itm:kt} $\oa(k)=\t$: Again, by
      Lemma~\labelcref{lem:t-next}, $\oa'(k)\neq\e$.

    \item $\oa(k)=\h$: $\oc=g_H(D,\oa)$.  $k$ is hungry and it has an eating
       neighbour $j$.   Hence $\oc(k)=\overzero$.  Now
       \begin{align*}
         \oa'(k) &=  f_S(\oa(k), \ob(k), \oc(k))\\
         \oa'(k) &=  f_S(\oa(k), \ob(k), \overzero)\\
         \oa'(k) &=  f_P(\oa(k), 0)\\
                 &=  \oa(k)\\
                 &=  \h
       \end{align*}
       Thus $\oa'(k)\neq\e$.

      \item  $\oa(k)=\e$:   This is ruled out by the
       induction hypothesis because $\oa$ is safe.
    \ee

  \item $\oa(j)=\h$:  Again, there are three cases:
      \be
        \item $\oa(k)=\t$:  $\oa'(k)\neq\e$ follows from
          Lemma~\labelcref{lem:t-next}.  

        \item $\oa(k)=\h$:  There are two cases:
          \be
              \item $j\mapsto k\in D$:  From
                Lemma~\labelcref{lem:D-dom}, $\oa'(k)\neq \e$.
              \item $k\mapsto j\in D$:  By identical
                reasoning, $\oa'(j)\neq \e$, which
                contradicts the assumption that
                $\oa'(j)=\e$. 
          \ee
        \item $\oa(k)=\e$: By the induction hypothesis
          (eaters are sinks) applied to $D$ and the fact
          that $E(j,k)$, it follows that $j\mapsto k\in D$.
          From Lemma~\labelcref{lem:h-if-n-eats}, it follows that
          $\oa'(j)\neq\e$, which contradicts the assumption
          that $\oa'(j)=\e$.  \ee \ee

\ee

\end{proof}

\subsection{Starvation freedom}
\label{subsec:starvation-freedom}

Starvation-freedom means that every hungry vertex eventually eats.
The argument for starvation freedom is built over the several lemmas.

The first of these asserts a central property of the
priority map, that it is acyclic.
\begin{lemma}[Priority Map is acyclic]
\label{lem:D-acyclic}
$D$ is acyclic.
\end{lemma}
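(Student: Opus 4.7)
My plan is to proceed by induction on the step index $i$ of the polled dynamics. The base case is immediate: $D^0 = \{j\mapsto k \mid E(j,k) \land j > k\}$ orients every edge according to the strict linear order of vertex indices, which is acyclic. For the inductive step, I assume $D$ is acyclic and show that $D' = d_H(D, \oa')$ is acyclic, using the safety invariant established in \cref{thm:safety-invariants}, which guarantees that $\oa'$ contains no two adjacent eaters.

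The key device is a level function $\ell:\Act\to\{0,1,2\}$ with $\ell(\e)=0$, $\ell(\t)=1$, $\ell(\h)=2$. I will establish two claims. \textbf{(1) Level monotonicity:} every edge $j\mapsto k\in D'$ satisfies $\ell(\oa'(j))\ge \ell(\oa'(k))$. This is a case analysis over the three clauses of $d_H$. Clause \labelcref{eqn:dh1} fires when the source is eating; after reversal the \emph{new} target is eating, so the inequality is trivial. Clause \labelcref{eqn:dh2} reverses a think$\mapsto$hungry edge, yielding a hungry$\mapsto$think edge with $\ell(\h)>\ell(\t)$. Clause \labelcref{eqn:dh3} keeps the direction, and its very hypothesis excludes both $\oa'(j)=\e$ and the pair $(\oa'(j),\oa'(k))=(\t,\h)$, leaving only combinations with $\ell(\oa'(j))\ge \ell(\oa'(k))$.

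\textbf{(2) No monochromatic cycle:} By claim (1), any cycle in $D'$ consists of vertices of equal level. I rule out each level: at level $0$, the safety invariant forbids adjacent eaters, so $D'$ has no edges among eaters at all. At level $1$, if both endpoints of an edge $\{j,k\}$ satisfy $\oa'(j)=\oa'(k)=\t$, then neither reversal clause of $d_H$ applies (dh1 needs an eating source; dh2 needs a hungry target), so the orientation in $D'$ coincides with that in $D$. A thinker-only cycle in $D'$ would therefore be a cycle in $D$, contradicting the induction hypothesis. The level $2$ argument is identical with hungry in place of thinking. Together, claims (1) and (2) yield acyclicity of $D'$.

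The main obstacle is not conceptual depth but precise bookkeeping of the $d_H$ clauses: one must verify, for each level, both the monotonicity at the boundary between levels and the preservation-of-orientation inside a level. In particular, the argument genuinely needs the safety invariant to exclude eater-eater edges, since if such edges existed in $E$ they would each be reversed by clause \labelcref{eqn:dh1} and could form cycles of eaters in $D'$. So safety and acyclicity must be carried together through the induction, which is why I appeal to \cref{thm:safety-invariants} rather than proving acyclicity in isolation.
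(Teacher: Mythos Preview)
Your proof is correct and takes a genuinely different route from the paper's. The paper argues by contradiction: pick a reversed edge $k\mapsto j$ that must lie on a hypothetical cycle in $D'$, then split on whether it arose from clause \labelcref{eqn:dh1} (in which case $j$ is an eater, hence a sink by \cref{thm:safety-invariants}\labelcref{itm:e-sink}, so cannot lie on a cycle) or from clause \labelcref{eqn:dh2} (in which case one walks around the cycle from the thinker $j$ to the hungry $k$ and finds a $\t\to\h$ step, contradicting \cref{thm:safety-invariants}\labelcref{itm:h-dom-t}). Your approach instead introduces the level function $\ell(\e)<\ell(\t)<\ell(\h)$, shows all $D'$-edges are level-nonincreasing, and then kills each monochromatic level separately by reducing to $D$ (levels~1 and~2) or invoking safety (level~0).

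Your decomposition is arguably cleaner: the level function makes explicit the hierarchy that the three clauses of $d_H$ enforce, and the monochromatic reduction is uniform. The paper's proof is shorter but the case-2 path argument is a bit terse (it implicitly uses that no eater can lie on the path, again via the sinks invariant). One small remark: your level-0 case does not actually require the full safety invariant. Since clause \labelcref{eqn:dh1} reverses \emph{every} edge whose $D$-source is eating, an all-eater cycle in $D'$ is just the reversal of an all-eater cycle in $D$, which the induction hypothesis already forbids. So you could drop the dependence on safety there and make all three levels reduce to $D$ in the same way; but appealing to safety is also perfectly valid since \cref{thm:safety-invariants} precedes this lemma.
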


\begin{proof}

The proof is by induction on $i$.

  The base case is true because $D^{0}$ is acyclic
  by construction. 

  For the inductive case, we need to show that $D'$ is
  acyclic.  Assume, for the sake of deriving a
  contradiction, that $D'$ has a cycle.  Since
  $D'=d_H(D,\oa')$ and $D$ is acyclic by the induction
  hypothesis, the cycle in $D'$ must involve an edge $d'$ in
  $D'$ but not in $D$.

  $d'$ is an edge $k\mapsto j$.   There are two possibilities
  based on the first two clauses of the definition of $d_H$:

   \be

    \item $\oa'(j)=\e$: In that case by
      \cref{thm:safety-invariants}~\labelcref{itm:e-sink},
        $j$ is a sink in $D'$.  If $d'$ is part of a cycle
        in $D'$, then $j$ is part of that cycle, but since
        $j$ is a sink, it can not participate in any cycle.
        Contradiction.

    \item $\oa'(j)=\t$ and $\oa'(k)=\h$: Since $k\mapsto j$
      is an edge in $D'$, there is a path 
       \[j\rightarrow l_1\mapsto\ldots l_m\mapsto l_{m+1}\ldots \mapsto k\]
       in $D'$. 

       Then it must be the case that for some $m$,
       $\oa'(l_m)=\t$ and $\oa'(l_{m+1})=\h$ and
       $D'(l_m,l_{m+1})$.  But by
       \cref{thm:safety-invariants}~\labelcref{itm:h-dom-t},
       $D'(l_{m+1}, l_m)$.  We can not have both
      $D'(l_m,l_{m+1})$ and $D'(l_{m+1}, l_m)$.
      Contradiction.
   \ee

\end{proof}

The next set of lemmas demonstrate how the function $d_H$
transforms the subordinate and dominator set of a hungry
vertex that is left unchanged in the next step. 

\begin{lemma}[Monotonicity of subordinate set and Anti-monotonicity of dominator set]
\label{lem:shrink-grow}

Let $\oa(j)=\h$ and $\oa'(j)=\h$, 
\be
  \item\label{itm:sub-doesnot-shrink} \textbf{Subordinate
    set monotonicity:} $D(j)\subseteq D'(j)$.

  \item\label{itm:dom-doesnot-grow} \textbf{Dominator set
    anti-monotonicity:} $D'^{-1}(j)\subseteq
    D^{-1}(j)$.  \ee
\end{lemma}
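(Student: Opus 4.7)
The plan is to unfold the definition $D' = d_H(D, \oa')$ from \cref{eqn:Dp} and do a direct case analysis on the clauses \labelcref{eqn:dh1}--\labelcref{eqn:dh3} of $d_H$. The key observation is that both clauses that \emph{flip} an edge $j\mapsto k$ require the source $j$ to be either eating (\labelcref{eqn:dh1}) or thinking (\labelcref{eqn:dh2}); neither fires when $\oa'(j)=\h$. So the subordinate set of a vertex which is hungry at the next step can only be preserved or augmented.

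For part \labelcref{itm:sub-doesnot-shrink}, I would fix $k \in D(j)$, i.e., $j\mapsto k \in D$, and apply $d_H(j\mapsto k, \oa')$. Since $\oa'(j)=\h$, clause \labelcref{eqn:dh1} (requiring $\oa'(j)=\e$) does not apply, and clause \labelcref{eqn:dh2} (requiring $\oa'(j)=\t$) does not apply. Hence the ``otherwise'' clause \labelcref{eqn:dh3} fires and $j\mapsto k \in D'$, so $k \in D'(j)$. This is the main content; the hypothesis $\oa(j)=\h$ is not even needed for this direction.

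For part \labelcref{itm:dom-doesnot-grow}, I would argue by contrapositive using the fact that, for any $\set{j,k}\in E$, exactly one of $j\mapsto k$ or $k\mapsto j$ belongs to $D$ (and similarly for $D'$). Suppose $k \in D'^{-1}(j)$, i.e., $k\mapsto j \in D'$, but $k \notin D^{-1}(j)$. Then $j\mapsto k \in D$, and part \labelcref{itm:sub-doesnot-shrink} gives $j\mapsto k \in D'$, contradicting $k\mapsto j \in D'$.

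There is no real obstacle here; the work is entirely a mechanical unfolding of $d_H$ guided by $\oa'(j)=\h$. The only subtlety is to be careful that the edge-orientation is uniquely determined in both $D$ and $D'$ (which follows from the definition of the priority map as a choice of orientation for each undirected edge of $E$), so that the contrapositive step in \labelcref{itm:dom-doesnot-grow} is legitimate. The hypothesis $\oa(j)=\h$ is not used in either part; it is stated as part of the lemma presumably to anchor the setting in which the lemma will later be applied (hungry vertices that stay hungry under the priority-map update).
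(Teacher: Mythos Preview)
Your argument is correct and, for part~\labelcref{itm:sub-doesnot-shrink}, essentially identical to the paper's: both unfold $D'=d_H(D,\oa')$ on the edge $j\mapsto k$ and observe that $\oa'(j)=\h$ rules out the two flipping clauses \labelcref{eqn:dh1} and \labelcref{eqn:dh2}, so clause~\labelcref{eqn:dh3} preserves $j\mapsto k$.

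For part~\labelcref{itm:dom-doesnot-grow} you take a slightly different route. The paper reasons \emph{backwards} through $d_H$: given $k\mapsto j\in D'$, it asks which clause of $d_H$ could have produced that oriented edge and concludes (since $\oa'(j)=\h$ blocks clauses \labelcref{eqn:dh1}--\labelcref{eqn:dh2} applied to an input $j\mapsto k$) that the input edge must already have been $k\mapsto j\in D$. You instead derive part~\labelcref{itm:dom-doesnot-grow} directly from part~\labelcref{itm:sub-doesnot-shrink} by contrapositive, using the fact that each undirected edge of $E$ carries exactly one orientation in $D$ (and in $D'$). Your route is slightly cleaner and sidesteps the paper's unexplained assertion ``$\oa'(k)=\h=\oa'(j)$'' in its proof of the second part, which is neither among the hypotheses nor actually needed. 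Your remark that the hypothesis $\oa(j)=\h$ is unused is also correct; only $\oa'(j)=\h$ is needed, since $d_H$ is applied to $\oa'$.
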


\begin{proof}
The proof relies on examining the clauses of the definition
$d_H$:

\be
   \item \textbf{Subordinate set monotonicity:} Let
     $D(j,k)$.  We wish to prove that $D'(j,k)$.  

     Now, $D'=d_H(D,\oa')$.   Consider $d_H(j\mapsto k,
     \oa')$:   Since $\oa'(j)=\h$, it follows, from the
     third clause of the definition of $d_H$ that $j\mapsto
     k\in D'$, i.e., $D'(j,k)$. 

  \item  \textbf{Dominator set anti-monotonicity:}  Let
    $D'^{-1}(j,k)$.  We wish to show that $D^{-1}(j,k)$.
    $D'^{-1}(j,k)$ means that $D'(k,j)$.   Similarly,
    $D^{-1}(j,k)$ means that $D(k,j)$.     

    Thus we are given that $D'(k,j)$ and we need to show
    that $D(k,j)$.  Now, $D'=d_H(D,\oa')$ and
    $\oa'(k)=\h=\oa'(j)$.  We reason backwards with the
    definition of $d_H$.  We are given something in the
    range $D'$ of $d_H$, we reason why it also exists in the
    domain $D$.  In the definition of $d_H$, only the last
    clause is applicable, which leaves the edge unchanged.
    Since $D'(k,j)$, it follows that $D(k,j)$.
\ee

\end{proof}

As a corollary, a hungry vertex that is top and continues to
be hungry in the next step also continues to be a top
vertex.

\begin{corollary}[Top continues]
\label{cor:top-continues}
  Given that $\oa(j)=\h=\oa'(j)$, and $\id{top}(D)(j)$, it
  follows that $\id{top}(D')(j)$.
\end{corollary}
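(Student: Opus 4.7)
The plan is to derive this as an almost immediate consequence of the subordinate set monotonicity part of \cref{lem:shrink-grow}, by unfolding the definition of $\id{top}$.

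First, I would recall that $\id{top}(D)(j)$ means precisely that every neighbour of $j$ is dominated by $j$ under $D$, i.e.\ $E(j) \subseteq D(j)$. Since $D(j)$ is by definition a subset of $E(j)$, being at the top is equivalent to the equality $D(j) = E(j)$.

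Next, I would invoke \cref{lem:shrink-grow}~\labelcref{itm:sub-doesnot-shrink}: because $\oa(j)=\h$ and $\oa'(j)=\h$, the subordinate set cannot shrink, so $D(j) \subseteq D'(j)$. Combining with the previous step, $E(j) = D(j) \subseteq D'(j)$. But $D'(j) \subseteq E(j)$ trivially (subordinates of $j$ in $D'$ must in particular be neighbours of $j$ in $E$, which is static), so $D'(j) = E(j)$, which is exactly $\id{top}(D')(j)$.

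There is no real obstacle here; the corollary is essentially a rephrasing of subordinate set monotonicity once one observes the characterization $\id{top}(D)(j) \iff D(j)=E(j)$. The only mild subtlety is remembering that $E$ itself is constant across steps (part of the hub's fixed state), so $D'(j) \subseteq E(j)$ with the same $E$ as in the hypothesis, allowing the two inclusions to close into an equality.
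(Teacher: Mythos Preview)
Your proof is correct. The paper's own proof uses the dual characterization: rather than $\id{top}(D)(j)\iff D(j)=E(j)$ together with subordinate set monotonicity (\cref{lem:shrink-grow}~\labelcref{itm:sub-doesnot-shrink}), it uses $\id{top}(D)(j)\iff D^{-1}(j)=\emptyset$ together with dominator set anti-monotonicity (\cref{lem:shrink-grow}~\labelcref{itm:dom-doesnot-grow}). The two arguments are mirror images of each other and equally direct; neither offers any real advantage over the other.
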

\begin{proof}
From \cref{lem:shrink-grow},
part~\labelcref{itm:dom-doesnot-grow}, 
$D'^{-1}(j)\subseteq D^{-1}(j)$.   Since $\id{top}(D)(j)$, it means that
$D^{-1}(j)=\emptyset$.   It follows
that $D'^{-1}(j)=\emptyset$, i.e., $\id{top}(D')(j)$.  
\end{proof}

The next lemma examines the set of eating neighbours of a
top hungry vertex after a step that leaves the vertex hungry
and at the top.  

\begin{lemma}[No new eating neighbours if top continues]
\label{lem:no-new-eating-nbrs}

Let $\oa(j)=\h=\oa'(j)$ and $\id{top}(D)(j)$.   Then
$D'_{\e}(j)\subseteq D_{\e}(j)$.  
\end{lemma}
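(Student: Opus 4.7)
The plan is to fix an arbitrary $k \in D'_{\e}(j)$ and show $k \in D_{\e}(j)$. Unpacking the notation, I need to show: given $j \mapsto k \in D'$ (which forces $E(j,k)$) and $\oa'(k) = \e$, that $j \mapsto k \in D$ and $\oa(k) = \e$. The idea is to determine $\oa(k)$ from $\oa'(k)$ using \cref{lem:t-next}, eliminate the surviving alternative via the $\id{top}(D)(j)$ hypothesis, and then recover the dominance $j \mapsto k \in D$ from the Eaters-are-sinks invariant.

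By \cref{lem:t-next} applied to $\oa'(k) = \e$, the only possibilities for $\oa(k)$ are $\h$ or $\e$. The main step is to rule out $\oa(k) = \h$. Suppose, for contradiction, that $\oa(k) = \h$. Then the transition to $\oa'(k) = \e$ must be driven by $\oc(k) = \overone$ (the only command issuable from a hungry state that produces eating, since the clause of $g_H$ giving $\pass$ does not apply when $\oa(k) = \h$). Unfolding $\oc = g_H(D, \oa)$, this forces $\id{ready}(D,\oa)(k) = \true$, and in particular $k \in \id{top}(D)$. But $j \in \id{top}(D)$ by hypothesis, and $E(j,k)$; unfolding the definition of $\id{top}$ then yields both $j \mapsto k \in D$ and $k \mapsto j \in D$, contradicting the fact that $D$ assigns a unique orientation to each edge $\set{j,k}$ of $E$.

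Hence $\oa(k) = \e$. Combined with $E(j,k)$, the Eaters-are-sinks invariant of \cref{thm:safety-invariants}~\labelcref{itm:e-sink} (applied to $D$ and $\oa$) gives $j \mapsto k \in D$, so $k \in D(j)$; together with $\oa(k) = \e$ this yields $k \in D_{\e}(j)$, completing the inclusion. The only delicate step is the case analysis on $\oa(k)$: once one sees that $\oa(k) = \h$ combined with the forced transition to eating would require two adjacent top vertices in $D$, the argument closes. The $\id{top}(D)(j)$ hypothesis is exactly what produces this contradiction; without it a hungry subordinate $k$ could legitimately switch to eating in the next step, so the statement would genuinely fail.
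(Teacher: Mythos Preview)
Your proof is correct and follows the same overall strategy as the paper: a case analysis on $\oa(k)$ given that $\oa'(k)=\e$. Two small differences are worth noting. For the case $\oa(k)=\h$, the paper dispatches it more directly via \cref{lem:D-dom}: since $\id{top}(D)(j)$ and $E(j,k)$ give $j\mapsto k\in D$, that lemma immediately yields $\oa'(k)=\h$, contradicting $\oa'(k)=\e$. Your ``two adjacent top vertices'' argument is valid but unpacks more of the machinery behind $g_H$. Second, once you have established $\oa(k)=\e$, invoking the Eaters-are-sinks invariant to obtain $j\mapsto k\in D$ is unnecessary: the hypothesis $\id{top}(D)(j)$ already gives $j\mapsto k\in D$ for every neighbour $k$ of $j$, so $k\in D(j)$ is immediate.
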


\begin{proof}
For the sake of  deriving a contradiction, assume
that $D'_{\e}(j)\not\subseteq D_{\e}(j)$.  

Then there is some vertex $k$ such that $k\in D'_{\e}(j)$
and $k\not\in D_{\e}(j)$.  Since $k\in D'_{\e}(j)$,  $k$ is
a neighbour of $j$.  We are given that $\oa'(k)=\e$ and
$\oa(k)\neq\e$.  

This leaves us with two possibilities: 

 \be
    \item $\oa(k)=\t$:  Then, by \cref{lem:t-next}, $k$'s
      activity cannot be $\e$ in the next step.  Therefore
      $k\not\in E'_{\e}(j)$.   Contradiction.

    \item $\oa(k)=\h$: Since $j$ is top in $D$, $j\mapsto
      k\in D$.  Then by \cref{lem:D-dom}, $\oa'(k)=\h$, so
      $k\not\in E'_{\e}(j)$.  Contradiction again.  \ee

\end{proof}

The next lemma generalises the second part of the previous
lemma and relates the closure of the dominator set of a
hungry vertex going from one step to the next.

\begin{lemma}[Transitive closure of the dominator set does
    not grow]
\label{lem:transitive-dominator}

If $\oa(j)=\h=\oa'(j)$, then $D'^{-1+}(j)\subseteq
D^{-1+}(j)$. 
\end{lemma}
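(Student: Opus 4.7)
The plan is to iterate the one-step anti-monotonicity of dominators (\cref{lem:shrink-grow}~\labelcref{itm:dom-doesnot-grow}) along a witnessing $D'$-path from a given $D'^{-1+}$-ancestor of $j$ back to $j$, transferring each edge into $D$. The obstacle is that \cref{lem:shrink-grow}~\labelcref{itm:dom-doesnot-grow} requires its target vertex to be hungry at both $\oa$ and $\oa'$, and \emph{a priori} we only know this for $j$ itself. The key observation is that the safety invariants of \cref{thm:safety-invariants} force every intermediate vertex on such a path to be hungry at both times, which unlocks the lemma all the way down.

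Concretely, given $k\in D'^{-1+}(j)$, fix a path $k=l_0\mapsto l_1\mapsto\cdots\mapsto l_m=j$ in $D'$. I would prove by reverse induction on $i=m,m-1,\ldots,1$ the joint invariant that the edge $l_{i-1}\mapsto l_i$ lies in $D$ and that $\oa(l_i)=\h=\oa'(l_i)$. Chaining the resulting $D$-edges then yields a $D$-path from $k$ to $j$, so $k\in D^{-1+}(j)$.

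The base case $i=m$ uses the hypothesis $\oa(j)=\h=\oa'(j)$ together with \cref{lem:shrink-grow}~\labelcref{itm:dom-doesnot-grow} applied at $l_m=j$ to extract $l_{m-1}\mapsto l_m\in D$. For the inductive step, assume the invariant at $i+1$; then $l_i\mapsto l_{i+1}\in D$ by the hypothesis and $l_i\mapsto l_{i+1}\in D'$ by the given path. To rule out $\oa'(l_i)=\e$, use \cref{thm:safety-invariants}~\labelcref{itm:e-sink}: an eater is a sink in $D'$, contradicting the outgoing edge $l_i\mapsto l_{i+1}$. To rule out $\oa'(l_i)=\t$, combine $\oa'(l_{i+1})=\h$ with \cref{thm:safety-invariants}~\labelcref{itm:h-dom-t}, which would force the reverse edge $l_{i+1}\mapsto l_i$ in $D'$, a contradiction. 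Hence $\oa'(l_i)=\h$. Symmetric reasoning, using the $D$-edge $l_i\mapsto l_{i+1}$ and $\oa(l_{i+1})=\h$, yields $\oa(l_i)=\h$. A final application of \cref{lem:shrink-grow}~\labelcref{itm:dom-doesnot-grow} at $l_i$ produces $l_{i-1}\mapsto l_i\in D$, closing the induction.

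The main challenge is spotting the correct joint invariant: hungry-propagation and edge-propagation are tightly interdependent, and trying to establish either one alone runs into circularity. Once the two-part invariant is set up and carried together down the path, each individual step is a direct appeal to the safety invariants and \cref{lem:shrink-grow}.
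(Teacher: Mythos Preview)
Your argument is correct. You take a witnessing $D'$-path from $k$ to $j$ and, by reverse induction along the path, show that each edge also lies in $D$; the joint invariant that every intermediate vertex is hungry at both $\oa$ and $\oa'$ is exactly what is needed to keep invoking \cref{lem:shrink-grow}~\labelcref{itm:dom-doesnot-grow}, and you correctly derive that hungriness from \cref{thm:safety-invariants}~\labelcref{itm:e-sink} and~\labelcref{itm:h-dom-t}.

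The paper's proof proceeds differently: it defines a depth measure $P(j)$ (the longest $D$-path from a top vertex down to $j$) and argues by induction on $P(j)$, using the recursive decomposition $D'^{-1+}(j)=D'^{-1}(j)\cup\bigcup_{k\in D'^{-1}(j)}D'^{-1+}(k)$ together with \cref{lem:shrink-grow}~\labelcref{itm:dom-doesnot-grow} to reduce to the smaller instance $k$. Both approaches rest on the same two ingredients (the safety invariants to propagate hungriness to dominators, and the one-step anti-monotonicity lemma), but your path-based induction is more local and more explicit: you carry $\oa(l_i)=\h=\oa'(l_i)$ as part of the invariant, whereas the paper must (and does, implicitly) re-derive $\oa'(k)=\h$ at the point where it applies the induction hypothesis to $k$. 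Your version also avoids introducing the auxiliary measure $P$ and its well-definedness argument via acyclicity of $D$; in exchange, you implicitly rely on $D'$ being acyclic so that the witnessing path is finite, which is equally available from \cref{lem:D-acyclic}.
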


\begin{proof}
If $\oa(j)=\h$, let $P(j)$ denotes the length of the longest
path from a top vertex in $D$ to $j$.  Note that $P$ is well
defined since $D$ is acyclic.  Also, if $k\mapsto j\in D$,
then, from
\cref{thm:safety-invariants},~parts~\labelcref{itm:e-sink}
(Eaters are sinks) and~\labelcref{itm:h-dom-t} (Hungry
dominate thinkers), $\oa(k)=\h$ and therefore $P(k)$ is
well-defined, and furthermore, $P(k) < P(j)$.

The proof is by induction on $P(j)$.

\textbf{Base case: } $P(j)=0$.  This implies that $j$ is a
top vertex in $D$ and therefore $D^{-1}(j)=\emptyset$.  Then
from \cref{cor:top-continues}, $j$ is a top vertex in $D'$
that is hungry.  Thus $D^{-1}(j)=\emptyset$ and the result
follows.

\textbf{Inductive case: } $P(j)>0$.
Now 

\[D^{-1+}(j) = D^{-1}(j)\union \set{D^{-1+}(k)\ |\ k
  \in D^{-1}(j)}\]

and, similarly

\[D'^{-1+}(j) = D'^{-1}(j)\union \set{D'^{-1+}(k)\ |\ k \in D'^{-1}(j)}\]

Let $l\in D'^{-1+}(j)$.   There are two cases:

\be
\item $l\in D'^{-1}(j)$:  From \cref{lem:shrink-grow},
  part~\labelcref{itm:dom-doesnot-grow}, it follows that
  $l\in D^{-1}(j)$ and hence $l\in D^{-1+}(j)$.  

\item $l\in D'^{-1+}(k)$ for some $k\in D'^{-1}(j)$: Then,
  by another application of \cref{lem:shrink-grow},
  part~\labelcref{itm:dom-doesnot-grow}, it follows that
  $k\in D^{-1}(j)$.  That is $k\mapsto j\in D$.  That means
  that $P(k)<P(j)$.  

  Applying the induction hypothesis on $k$, we have
  $D'^{-1+}(k)\subseteq D^{-1+}(k)$.  Hence $l\in
  D^{-1+}(k)$.  From this it follows that $l\in
  D^{-1+}(j)$.  
\ee

\end{proof}

We now prove that the N Diners with centralised controller
exhibits starvation freedom.  
\begin{theorem}[starvation freedom]
\label{thm:starvation-freedom}

The system of N Dining Philosophers meets the following
starvation freedom properties:
\begin{enumerate}

\item\label{itm:eat-finishes} \textbf{Eater eventually
  finishes:} If $j$ is eating, then $j$ will eventually
  finish eating.

\item\label{itm:top-eats} \textbf{Top eventually  eats:}  If $j$ is a
  hungry top vertex, then $j$ will eventually start
  eating.

\item\label{itm:hungry-tops} \textbf{Hungry eventually
  tops:} If $j$ is a hungry vertex that is not top, then $j$
  will eventually become a top vertex.
\end{enumerate}

From the above three properties, one may conclude that a
hungry vertex eventually eats.  
\end{theorem}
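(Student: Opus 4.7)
My plan is to prove the three parts in order and then combine them to derive starvation-freedom in full.

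Part~\labelcref{itm:eat-finishes} follows immediately from the problem's standing assumption that no philosopher eats forever, together with the definition of $g_H$ in \cref{eqn:g_H}: whenever $\oa(j)=\e$, the controller issues $\oc(j)=\pass$, so by \cref{eqn:f_S-eq2} the dynamics reduces to $\oa'(j)=f_P(\e,\ob(j))$. The first $\ob(j)=1$ that arrives, which must come by assumption, switches $j$ from $\e$ to $\t$.

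For Part~\labelcref{itm:top-eats}, suppose $j$ is hungry and top. The aim is to show that $j$'s set of eating neighbours is drained within finite time. \Cref{cor:top-continues} keeps $j$ top as long as it remains hungry, and \cref{lem:no-new-eating-nbrs} then gives that $D_\e(\oa)(j)$ is non-increasing along any such interval. Part~\labelcref{itm:eat-finishes} removes eating neighbours one by one; \cref{lem:t-next} forbids thinking neighbours from jumping directly into $\e$; and \cref{lem:D-dom} keeps hungry neighbours, who are dominated by the top vertex $j$, stuck in $\h$. Once no neighbour of $j$ is eating, $\id{ready}(D,\oa)(j)$ holds, the controller issues $\oc(j)=\overone$, and the next transition produces $\oa'(j)=\e$.

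Part~\labelcref{itm:hungry-tops} is the main obstacle. The key tool is \cref{lem:transitive-dominator}: while $j$ stays hungry, $D^{-1+}(j)$ is non-increasing. I would show that $|D^{-1+}(j)|$ must strictly decrease within a finite number of steps, so that induction on this cardinality drives it to zero. Three observations underpin the decrease. First, every $k\in D^{-1+}(j)$ is hungry: an eating $k$ would possess an outgoing edge along its path to $j$, contradicting \cref{thm:safety-invariants}~\labelcref{itm:e-sink}, while a thinking $k$ with a hungry successor on the path would have the offending edge reversed by \cref{thm:safety-invariants}~\labelcref{itm:h-dom-t}. Second, any $D$-maximal element $l$ of the finite, acyclic set $D^{-1+}(j)$ is in fact top in $D$, because any dominator of $l$ in $D$ would again lie in $D^{-1+}(j)$ by transitivity, contradicting maximality. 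Third, applying Part~\labelcref{itm:top-eats} to the hungry top vertex $l$ yields that $l$ starts eating within finite time; at that moment \cref{thm:safety-invariants}~\labelcref{itm:e-sink} turns $l$ into a sink in the updated $D$, so $l$ has no outgoing edges and hence $l\notin D^{-1+}(j)$. Combined with the monotonicity supplied by \cref{lem:transitive-dominator}, this is a strict decrease.

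Combining the three parts, a hungry $j$ first becomes top in finite time by Part~\labelcref{itm:hungry-tops}, then begins eating in finite time by Part~\labelcref{itm:top-eats}, and each eating bout terminates by Part~\labelcref{itm:eat-finishes}; the overall starvation-freedom statement follows. The delicate bookkeeping is in Part~\labelcref{itm:hungry-tops}: one must thread the hungriness assumption on $j$ through the entire waiting interval so that \cref{lem:transitive-dominator} applies iteratively, and one must check that once a top ancestor $l$ enters $\e$ and becomes a sink, it cannot re-enter $D^{-1+}(j)$ before the induction step on $|D^{-1+}(j)|$ has been consumed.
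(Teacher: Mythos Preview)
Your proof is correct and rests on the same three-part decomposition and the same supporting lemmas (\cref{cor:top-continues}, \cref{lem:no-new-eating-nbrs}, \cref{lem:transitive-dominator}, \cref{lem:D-dom}, \cref{thm:safety-invariants}) as the paper. The route differs in how progress is packaged. The paper constructs explicit variant functions---$W_\e(j)$ counting the steps until $j$ finishes eating; $W_{\tp}(j)=\Sigma_{k\in D_\e(j)}W_\e(k)$ for a hungry top vertex; and a lexicographic pair $W_\h(j)=[\,|D^{-1+}(j)|,\ \Sigma\{W_{\tp}(k)\mid k\in D^{-1+}(j),\ \id{top}(D)(k)\}\,]$---and shows each strictly decreases \emph{at every single step} while its precondition holds. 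You instead phrase each part as an ``eventually'' statement and invoke Part~\labelcref{itm:top-eats} as a black box inside Part~\labelcref{itm:hungry-tops}, establishing only that $|D^{-1+}(j)|$ drops after finitely many steps rather than at every step. Your argument is more modular and sidesteps the auxiliary second coordinate of $W_\h$; the paper's version buys a uniform per-step well-founded measure. The bookkeeping you flag---that $j$ remains hungry throughout the wait and that an ancestor cannot re-enter $D^{-1+}(j)$---is immediate: a non-top hungry $j$ has a dominator and so stays hungry by \cref{lem:D-dom}, and \cref{lem:transitive-dominator} then makes $D^{-1+}(j)$ monotone non-increasing over that entire interval.
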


The proof of this theorem hinges on defining an appropriate
set of metrics on each behaviour of the Dining Philosophers
problem.  

\begin{proof}

We define a set of metrics that map a hungry or eating
vertex $j$ to a natural number:

\be
  \item Let $W_{\e}:V\rightarrow \N\rightarrow \N$ be
    defined as follows:

    $W_{\e}(j)[i]=0$ if $\oa(j)\neq\e$,
    otherwise $W_{\e}(j)[i]$ is equal to the number of steps
    remaining before $j$ finishes eating.  Clearly,
    $W_{\e}(j)$ is positive as long as $j$ eats and $0$
    otherwise.

  \item Let $W_{\tp}: V\rightarrow \N\rightarrow \N$ be defined as follows:

   \begin{align*}
      W_{\tp}(j)[i] &= \Sigma_{k\in D_{\e}[i](j)} W_{\e}(k)[i], \quad  \mbox{if $\id{top}(D,j)$ and $\oa(j)=\h$}
   \end{align*}
     Note that $W_{\tp}(j)[i]$ is positive as long as $j$ is
     a hungry top vertex that is not ready, and $0$
     otherwise. 
  \item If $\oa(j)=\h$:

 \begin{align*}
   W_{\h}(j)&\ida [|D^{-1+}(j)|, \\
           &  \quad \quad \Sigma\set{W_{\tp}(k)\ |\ k\in  D^{-1+}(j)\ \land \id{top}(D)(k)}]
 \end{align*}
\ee

$W_{\h}$ is a pair $[v_1,v_2]$.  $W_{\h}$ is well defined
since $D$ is acyclic.  The ordering is lexicographic:
$[v'_1,v'_2] < [v_1, v_2]$ iff $v'_1 < v_1$ or $v'_1=v_1$
and $v'_2 < v_2$.  $D^{-1+}(j)$ denotes the transitive
closure of $\set{j}$ with respect to $D^{-1}$.  If $j$ is is
not at the top, then $D^{-1+}(j)\neq \emptyset$ and
therefore $v_1>0$.  If $j$ is at the top then
$W_{\h}(j)=[0,0]$.

We prove the following:

\be

 \item $W_{\e}$ is a decreasing function:  If $\oa(j)=\e
   =\oa'(j)$, then $W'_{\e}(j) < W_{\e}(j)$.  The proof is
   obvious from the definition of $W_{\e}$. 

 \item\label{itm:Wtop-dec} $W_{\tp}$ is a decreasing
   function: If $\oa(j)= \h = \oa'(j)$, then $W_{\tp}'(j) <
   W_{\tp}(j)$.  Since $j$ is a hungry top vertex in $D$ and
   hungry in $D'$, it follows from \cref{cor:top-continues}
   (Top continues) that $j$ is a top vertex in $D'$.

\begin{align*}
W'_{\tp}(j) &= \Sigma_{k\in D'_{\e}(j)} W_{\e}'(k)\\
           &< \Sigma_{k\in D_{\e}(j)} W_{\e}(k)\\
           &= W_{\tp}(j)
\end{align*}

The penultimate inequality holds because of the following
two reasons: From
\cref{lem:no-new-eating-nbrs},
 $D'_{\e}(j)\subseteq D_{\e}(j)$.  Second, from the definition of
$W_{\e}$, for each $k\in D'_{\e}$, $W'_{\e}(k)< W_{\e}(k)$.   

The last step holds because $j$ is top in $D$.

  \item\label{itm:Wh-decr} $W_{\h}$ is a decreasing
    function: If $\oa(j)=\h=\oa'(j)$, $j$ is not top in $D$,
    then $W'_{\h}(j)<W_{\h}(j)$.

   To prove this, consider the definition of $W_{\h}(j)$:

       \begin{align*}
           W'_{\h}(j)&= [|D'^{-1+}(j)|, \\
                    &  \quad \quad \Sigma\set{W'_{\tp}(k)\ |\ k\in D'^{-1+}(j)\ \land \id{top}(D')(k)}]
       \end{align*}

From \cref{lem:transitive-dominator}, $D'^{-1+}(j)\subseteq
D^{-1+}(j)$.  There are two cases:

   \be
     \item $D'^{-1+}(j)\subset D^{-1+}(j)$:   Clearly,
       $W'_{\h}(j) < W_{\h}(j)$.

     \item $D'^{-1+}(j) = D^{-1+}(j)$:   Again, there are
       two cases:

        \be 
          \item $D^{-1+}(j)=\emptyset$: then $j$ is a hungry
            top vertex $D$.  This violates the assumption
            that $j$ is not top in $D$. 

          \item $D^{-1+}(j)\neq\emptyset$: Then, for each
            top vertex $k$ in $D^{-1+}(j)$ and
            $D'^{-1+}(j)$, $k$ is in the domain of $W_{\tp}$
            and $W'_{\tp}$.  Furthermore, from
            part~\labelcref{itm:Wtop-dec}, $W'_{\tp}(k)<
            W_{\tp}(k)$.  The result follows:
            $W'_{\h}(j)<W_{\h}(j)$.

\ee \ee

%% If $\oa(j)=\h$, let $P(j)$ denotes the length of the longest
%% path from a top vertex in $D$ to $j$.  Note that if
%% $k\mapsto j\in D$, then from
%% \cref{thm:safety-invariants},~parts~\labelcref{itm:e-sink}
%% (Eaters are sinks) and~\labelcref{itm:h-dom-t} (Hungry
%% dominate thinkers), $\oa(k)=\h$ and therefore $P(k)$ is
%% defined, and $P(k) < P(j)$.

%% \textbf{Inductive case:}

%% Note that $D(k,j)$ implies $\oa(k)=\h$ and $P(k) < P(j)$
%% (reasoned above).  

%% Now $\oa'(k)=\e$ or $\oa'(k)=\h$.  In the former case,
%% $k\not\in D'(j)$ and $k\in D_{\e}(j)$.

%% \begin{align*}
%% W'(j) &= \oa_{k\in D'^{-1}(j)} W'(k) + \oa_{k\in D'_{\e}(j)} W'(k)\\
%%       &=   + \oa_{k\in D'_{\e}(j)} W'(k) &\mbox{since $D'^{-1}(j)=\emptyset$}\\
%%       &= \oa_{k\in D'_{\e}(j)} W'(k)\\
%%       &< \oa_{k\in D_{\e}(j)} W(k)\\
%%       &= W(j)
%% \end{align*}
\ee

\end{proof}

\section{Distributed Solution to the N Diners problem}
\label{sec:dist}

In the distributed version of N Diners, each philosopher
continues to be connected to other philosophers adjacent to
it according to $E$, but there is no central hub controller.
Usually the problem is stated as trying to devise a protocol
amongst the philosophers that ensures that the safety and
starvation freedom conditions are met.  The notion of devising a
protocol is best interpreted as designing a collection of
systems and their composition.

\subsection{Architecture and key idea}
The centralised architecture employed the global maps $\oa$,
$\ob$, $\oc$ and $D$.  While the first three map a vertex
$j$ to a value (activity, choice input, or control) the last
maps an edge $\set{j,k}$ to one of the vertices $j$ or $k$.

The key to devising a solution for the distributed case is
to start with the graph $G=\pair{V,E}$ and consider its
distributed representation.  The edge relation $E$ is now
distributed across the vertex set $V$.  Let $\alpha_j$
denote the size of the set of neighbours $E(j)$ of $j$.  We
assume that the neighbourhood $E(j)$ is arbitrarily ordered
as a vector $\vec{E_j}$ indexed from $1$ to $\alpha_j$.  Let
$j$ and $k$ be distinct vertices in $V$ and let $\set{j,k}\in
E$.  Furthermore, let the neighbourhoods of $j$ and $k$ be
ordered such that $k$ is the $m$th neighbour of $j$ and $j$
is the $n$th neighbour of $k$.  Then, by definition,
$\vec{E_j}(m)=k$ and $\vec{E_k}(n)=j$.

In addition, with each vertex $j$ is associated a
philosopher system $S_j$ and a \emph{local} controller
system $L_j$.  The philosopher system $S_j$ is an instance
of the system $S$ defined in \cref{subsec:S}.  In designing
the local controllers, the guiding principle is to
distribute the state of the centralised controller to $N$
local controllers.  The state of the centralised controller
consists of the directed graph $D$ that maps each edge in
$E$ to its dominating endpoint and the map
$\oc:V\longrightarrow \id{Cmd}$ which is also the output of
the hub controller.

The information about the direction of an edge $\set{j,k}$
is distributed across two \emph{dominance vectors}
$\vec{d_j}$ and $\vec{d_k}$.  Both are boolean vectors
indexed from 1 to $\alpha_j$ and $\alpha_k$, respectively.
Assume that $k=\vec{E_j}(m)$ and $j=\vec{E_k}(n)$.  Then,
the value of $D(\set{j,k})$ is encoded in $\vec{d_j}$ and
$\vec{d_k}$ as follows: If $D(\set{j,k})=j$ then
$\vec{d_j}(m)=\true$ and $\vec{d_k}(n)=\false$.  If
$D(\set{j,k})=k$, then $\vec{d_j}(m)=\false$ and
$\vec{d_k}(n)=\true$.

In the next subsection we define the local controller as a
Tabuada system.

\subsection{Local controller system for a vertex $j$}
\label{subsec:local-controller-defn}

The controller system $L_j$ has $\alpha_j+1$ input ports of
type $A$ which are indexed $0$ to $\alpha_j$.  The output of
$L_j$ is of type $\id{Cmd}$.

The local controller $L_j$ is a Tabuada system
\[L_j = \pair{X,X^{0},U,f, Y, h}\]

\noindent where 

\be
 \item $X=([1..\alpha_j]\longrightarrow \Bool)\times
   \id{Cmd}$.  Each element of $X$ is a tuple
   $(\vec{d_j},c_j)$ consisting of a \emph{dominance vector}
   $\vec{d_j}$ indexed $1$ to $\alpha_j$ and a command value
   $c_j$.  $\vec{d_j}(m)=\true$ means that there is a
   directed edge from $j$ to its $m$th neighbour $k$;
   $\false$ means that there is an edge from its $m$th
   neighbour to $j$. 

 \item $X^{0}$ is defined as follows:
   $X^0=\pair{\vec{d^{0}_j},c^{0}_j}$ where $c^{0}_j=\pass$
   and $\vec{d^{0}_j}(m) = \true$ if $\vec{E_j}(m)=k$
   and $j>k$, $\false$ otherwise.   In other words, there is
   an edge from $j$ to $k$ if $j>k$. 

 \item $U=[0..\alpha_j]\longrightarrow A$: We denote the
   input to $L_j$ as a vector $\vec{a_j}$, the activities of
   all the neighbours of the $j^{th}$ philosopher, including
   its own activity.  $\vec{a_j}(m)$ denotes the value of
   the $m$th input port.

 \item $f_L:X, U\longrightarrow X$ defines the dynamics of
   the controller and is given below.

 \item $Y=\id{Cmd}$, and

 \item $h:X\rightarrow Y$ and $h(\vec{d_j},c_j)=c_j$.
   The output of the controller $L_j$ is denoted $c_j$.  
\ee

The function $f_L$ takes a dominance vector $\vec{d}$ of
length $M$, a command $c$ and an activity vector $\vec{a}$
of length $M+1$ and returns a pair consisting of a new
dominance vector $\vec{d'}$ of length $M$ and a new command
$c'$.  $f_L$ first computes the new dominance vector
$\vec{d'}$ using the function $d_L$.  The result $\vec{d'}$
is then passed along with $\vec{a}$ to the function $g_L$,
which computes the new command value $c'$.  The functions
$f_L$ and $d_L$ are defined below:

\[f_L((\vec{d},c),\vec{a}) = (\vec{d'}, c') \quad \mbox{where}\]

\begin{align}
  \vec{d'} &= \vec{d_L}(\vec{d},\vec{a}),\ \mbox{and}\\
  c'       &= g_L(\vec{d'},\vec{a})
\end{align}

\begin{align}
\vec{d_L}(\vec{d},\vec{a})(m) &\ida
d_L(\vec{d}(m),\vec{a}(0),\vec{a}(m)) \quad \mbox{where
  $m\in [1..M]$}
\end{align}

$d_L(d,a_0,a)$ is  defined as
\begin{align}
d_L(d, \t, \t) &=d\label{eqn:dLdef-tt}\\
d_L(d, \t, \h) &=\false\label{eqn:dLdef-th}\\
d_L(d, \t, \e) &=\true\label{eqn:dLdef-te}\\
d_L(d, \h, \e) &=\true\label{eqn:dLdef-he}\\
d_L(d, \h, \h) &=d\label{eqn:dLdef-hh}\\
d_L(d, \e, \h) &=\false\label{eqn:dLdef-eh}\\
d_L(d, \e, \t) &=\false\label{eqn:dLdef-et}\\
d_L(d, \h, \t) &=\true\label{eqn:dLdef-ht}\\
d_L(d, \e, \e) &=d\label{eqn:dLdef-ee}
\end{align}

$d_L(\vec{d}(m), \vec{a}(0), \vec{a}(m))$ takes the $m$th
component of a dominance vector $\vec{d}$ and computes the
new value based on the activity values at the $0$th and
$m$th input ports of the controller.

%% Note that $d_L$ is idempotent:
%% \[d_L(d_L(d, a_0, a), a_0, a) = d_L(d, a_0, a)\]

The function $g_L$ takes a dominance vector $\vec{d}$ of
size $M$ and an activity vector $\vec{a}$ of size $M+1$ and
computes a command.  It is defined as follows:
\begin{align*}
  g_L(\vec{d},\vec{a}) &\ida \pass, \quad \quad \mbox{if $\vec{a}(0)\in\set{\t,\e}$}\\
                      &\ida \overone, \quad \mbox{if $\id{ready}_L(\vec{d},\vec{a})=\true$}\\
                      &\ida \overzero, \quad \mbox{otherwise}\\\\\nonumber
  \id{ready}_L(\vec{d}, \vec{a}) &\ida \true, \quad \mbox{if  $\vec{a}(0)=\h$ and $\id{top}_L(\vec{d})$ and %
                            $\forall m\in [1..M]:\ \vec{a}(m)\neq\e$}\\
                           &\ida \false,  \quad \mbox{otherwise}\\\\\nonumber
  \id{top}_L(\vec{d}) &\ida \true,  \quad \mbox{if $\forall m\in [1..M]: \vec{d}(m)=\true$}\\
                      &\ida \false,  \quad \mbox{otherwise}
\end{align*}

Now we can write down the equations that define the
asynchronous dynamics of the philosopher system.  Consider
any arbitrary philosopher $j$ and its local controller $L_j$:
\begin{align}
  a_j^0          &= \t\\
  \mbox{For $m \in [1..\alpha_j]:$} \quad
  \vec{d_j}^0(m) &= \true, \quad \mbox{if $\vec{E_j}(m)=k$
    and $j>k$}\\\nonumber
  &= \false, \quad \mbox{otherwise}\\\nonumber\\
  c_j         &= g_L(\vec{d_j},\vec{a_j})\\
  a_j'        &= f_S(a_j, b_j, c_j)\label{eqn:dist-a}\\
  \vec{d_j}'  &= d_L(\vec{d_j}, \vec{a_j'})
\end{align}

Note from equation \labelcref{eqn:dist-a} that the philosopher dynamics has not changed - it is
the same as that of the centralised case.  A close
examination of the equations help us deduce that the
dynamics we obtained in the distributed case are very much
comparable to that of the centralised case.  This identical
nature of the dynamics form the foundation for the
correctness proofs which follow later.

\subsection{Wiring the local controllers and the philosophers}
\label{subsec:dist-wiring}

Each philosopher $S_j$ is defined as the instance of the
system $S$ defined in \cref{subsec:S}.  Let the choice
input, control input and output of the philosopher system
$S_j$ be denoted by the variables $S_j.c$, $S_j.\bbot$ and
$S_j.a$, respectively.  The output of $L_j$ is fed as the
control input to $S_j$.  The output $S_j$ is fed as $0$th
input of $L_j$.  In addition, for each vertex $j$, if $k$ is
the $m$th neighbour of $j$, i.e., $k=\vec{E_j}(m)$, then the
output of $S_k$ is fed as the $m$th input to $L_j$.  (See
\cref{fig:dist-n-diners-arch}).

\begin{figure}
\caption{Wiring between the systems of adjacent philosophers
  $j$ and $k$  where $k$ and
$j$ are respectively the $m$th and $n$th neighbour of each
other\label{fig:dist-n-diners-arch}.}
\centering{
 \includegraphics[width=5in]{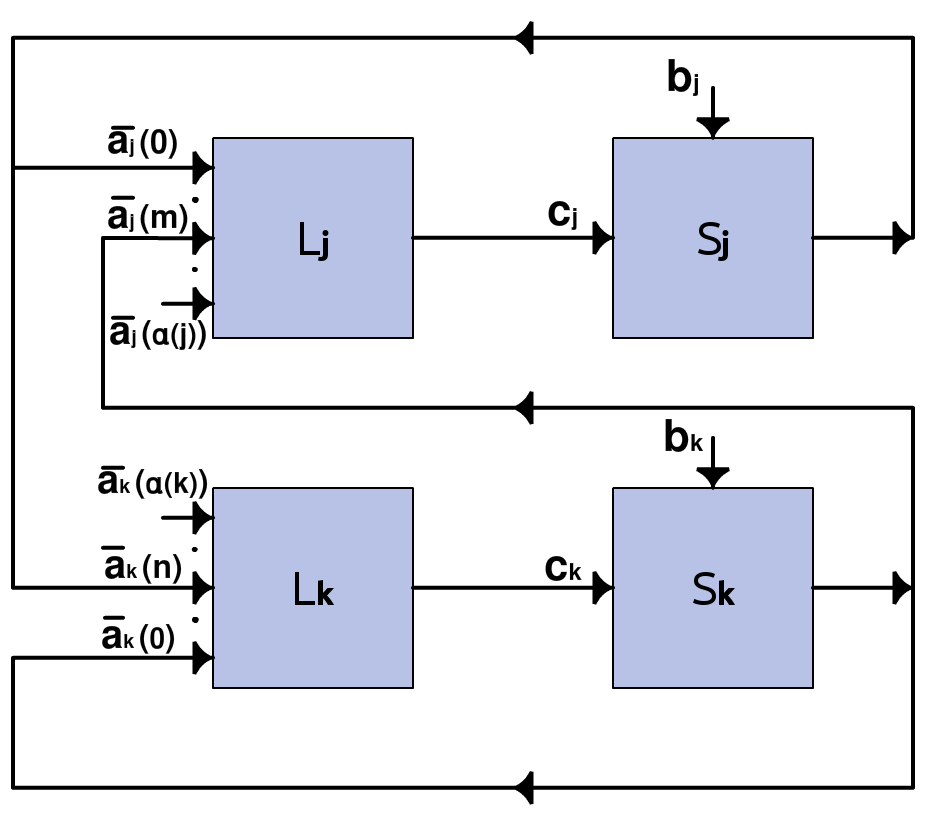}
}
\end{figure}

The wiring between the $N$ philosopher systems and the $N$
local controllers is the interconnect relation $\I\subseteq
\Pi_j S_j.X \times S_j.U\times L_j.X \times L_j.U$, $1\leq
j\leq N$ defined via the following set of constraints:

\be

\item $c_j=S_j.c$: The output of the local controller $L_j$
  is equal to the control input of the philosopher system
  $S_j$.

\item $S_j.a = \vec{a_j}(0)$: the output of the
  philosopher $S_j$ is fed back as the input of the 0th
  input port of the local controller $L_j$.

\item $S_k.a = \vec{a_j}(m)$, where $1\leq m\leq \alpha_j$
  and $k=\vec{E_j}(m)$: the output of the philosopher $S_k$
  is connected as the input of the $m$th input port of the
  local controller $L_j$ where $k$ is the $m$th neighbour of
  $j$.

\item $\vec{d_j}(m)=\neg \vec{d_k}(n)$, where
  $k=\vec{E_j}(m)$ and $j=\vec{E_k}(n)$.  The dominance
  vector at $j$ is compatible with the dominance vectors of
  the neighbours of $j$.
\ee

\subsection{Correctness of the solution to the Distributed case}
\label{subsec:correctness-dist}

The correctness of the solution for the distributed case
rests on the claim that under the same input sequence, the
controllers and the philosopher outputs in the distributed
and centralised cases are identical.  This claim in turn
depends on the fact that the centralised state may be
reconstructed from the distributed state.  

\begin{theorem}[Correctness of Distributed Solution to N
    Diners]
\label{thm:dist-equals-central}

Consider the sequence of lifted choice inputs fed to both
the centralised and the distributed instances of an N Diners
problem.  We show that after equal number of computations,
for each $j\in V$:

 \be
  \item $\oa(j) = a_j$: The output $\oa(j)$ of the $j$th
    Philosopher in the centralised architecture is identical
    to the output $a_j$ of the $j$th Philosopher in the
    distributed architecture.

  \item $\oc(j) = c_j$: $\oc(j)$, the $j$th output of hub
    controller in centralised architecture is identical to
    the output $c_j$ of the $j$th local controller in the
    distributed architecture.

  \item For each $k\in E(j)$, 
      \be 
        \item $k=\vec{E_j}(m)$ for some $m\in [1..\alpha_j]$, and
        \item $j=\vec{E_k}(n)$ for some $n\in [1..\alpha_k]$, and
        \item $j\mapsto k\in D$ iff $\vec{d_j}(m)=\true$
          and $\vec{d_k}(n)=\false$.
      \ee
 \ee 
\end{theorem}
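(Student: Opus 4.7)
The plan is a straightforward simultaneous induction on the polled-dynamics step index $i$, proving all three clauses in tandem. The base case $i = 0$ is immediate: both architectures initialise every activity to $\t$ and every command to $\pass$, and the definition $\vec{d_j}^0(m) = \true \iff \vec{E_j}(m) = k \land j > k$ mirrors the centralised initialisation $D^0 = \{j \mapsto k \mid E(j,k) \land j > k\}$, so clause~(3) holds at step $0$ by construction.

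For the inductive step I would carry out three sub-arguments. \emph{Command agreement:} assuming the three clauses at step $i$, the input vector $\vec{a_j}$ of $L_j$ matches $\oa$ restricted to $\{j\} \cup E(j)$, and clause~(3) immediately yields $\id{top}_L(\vec{d_j}) \iff \id{top}(D)(j)$, because both reduce to $\vec{d_j}(m) = \true$ for every $m$. Consequently $\id{ready}_L(\vec{d_j}, \vec{a_j}) \iff \id{ready}(D, \oa)(j)$, and a direct comparison of the defining clauses of $g_L$ and $g_H$ gives $c_j = \oc(j)$, i.e.\ clause~(2) at step $i$. \emph{Activity agreement at step $i+1$:} since the philosopher dynamics $f_S$ is literally identical in both architectures and is fed the same current activity (by IH), the same choice bit (the common input), and the same control (by the previous sub-argument), $a_j' = \oa'(j)$ follows, yielding clause~(1) at step $i+1$.

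\emph{Dominance agreement at step $i+1$:} for each edge $\{j,k\}$ with $k = \vec{E_j}(m)$ and $j = \vec{E_k}(n)$, the centralised update $d_H$ rewrites the directed edge as a function only of the pair $(\oa'(j),\oa'(k))$, while the local updates yield $\vec{d_j}'(m) = d_L(\vec{d_j}(m), \oa'(j), \oa'(k))$ and $\vec{d_k}'(n) = d_L(\vec{d_k}(n), \oa'(k), \oa'(j))$. The task is a finite case analysis over the ordered pairs of next-state activities, matching the clauses (\labelcref{eqn:dh1})--(\labelcref{eqn:dh3}) of $d_H$ against the clauses (\labelcref{eqn:dLdef-tt})--(\labelcref{eqn:dLdef-ee}) of $d_L$, checking in each pair that the two local outputs are complementary booleans whose truth value encodes exactly the edge direction that $d_H$ assigns. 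In the cases where $d_H$ leaves the edge unchanged (the pairs $(\t,\t)$, $(\h,\h)$, and those where the incoming orientation is already what $d_H$ would output), $d_L$ correspondingly returns its input $d$, so the IH of clause~(3) transports directly.

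The main obstacle is the pair $(\oa'(j),\oa'(k)) = (\e,\e)$: here $d_H$ reassigns the edge by clause~(\labelcref{eqn:dh1}), whereas clause~(\labelcref{eqn:dLdef-ee}) of $d_L$ leaves both local entries untouched, which would break the correspondence. This discrepancy is harmless because safety, established for the centralised system in \cref{thm:safety-invariants} (part~\labelcref{itm:safe}), rules out adjacent eating vertices in the centralised run, and the just-established activity agreement $a_j' = \oa'(j)$ transports this ruling to the distributed run. Once this case is discharged, the remaining eight ordered pairs are routine verification, and the induction closes.
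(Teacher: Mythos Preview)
Your proposal is correct and follows essentially the same strategy as the paper: induction on the step index, with the base case handled by the matching initialisations, and the inductive step deriving command agreement from the state correspondence (via matching $g_H$ against $g_L$ through $\id{top}$ and $\id{ready}$), activity agreement from the identical philosopher dynamics $f_S$, and dominance agreement by a finite case analysis that invokes centralised safety (\cref{thm:safety-invariants}, part~\labelcref{itm:safe}) to discharge the $(\e,\e)$ case. The only presentational difference is that the paper organises the dominance case analysis by which clause of $d_H$ fires (three cases, each then subdivided by the applicable $d_L$ clauses), whereas you organise it directly by the nine ordered activity pairs $(\oa'(j),\oa'(k))$; both partitions cover the same ground and both lean on safety at exactly the same point.
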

\begin{proof}

Note that in the last clause it is enough to prove one
direction (only if) since, if $k\in E(j)$ and $j\mapsto
k\not\in D$ implies $k\mapsto j\in D$; the proof of this
case is simply an instantiation of the theorem with $k$
instead of $j$. 

The proof for the rest of the conditions is by induction on
the number of computations\footnote{If $k$ is the value of a
  variable after $i$ computations, then $k'$ stands for
  value after $i+1$ computations.  All non-primed variables
  are assumed to have undergone the same number of
  computations.  Same is the case with primed variables, but
  with one extra computation than its non-primed version.}.

For the base case, in the centralised architecture, the
initial values for each vertex $j$ are $\oa^{0}(j)=\t$,
$\oc^{0}(j)=\pass$, and for each $k\in E(j)$, $D^{0}(\set{j,k})$ is
equal to $j$ if $j>k$, and $k$ otherwise.

In the distributed regime, the initial value $a_j$ of the
output of Philosopher $S_j$ is $\t$ by definition of $S_j$.
The initial value of the output $c_j$ of the local
controller $L_j$ is $\pass$ by definition.  Also, note that
in the initial state $\vec{d_j}(m)=\true$ iff
$k=\vec{E_j}(m)$ and $j>k$, $\false$ otherwise.

Assume, for the sake of the induction hypothesis, the
premises above are all true.  We wish to show that

 \be
  \item $\oa'(j) = a'_j$ 
  \item $\oc'(j) = c'_j$ 
  \item For each $k\in D'(j)$, 
      \be 
        \item $k=\vec{E_j}(m)$ for some $m\in [1..\alpha_j]$, and
        \item $j=\vec{E_k}(n)$ for some $n\in [1..\alpha_k]$, and
        \item $j\mapsto k\in D'$ iff $\vec{d'_j}(m)=\true$
          and $\vec{d'_k}(n)=\false$.
      \ee 
 \ee 

We start with $D'$.  Suppose $j\mapsto k\in D'$.  We need to
show that $\vec{d'_j}(m)=\true$ and $\vec{d'_k}(n)=\false$.
Based on the definition of $d_H$, there are three cases:
\be
  \item Case 1  (clause \labelcref{eqn:dh1} of
    $d_H$):
     \begin{align}
       \oa(k) =\e\\
       k\mapsto j\in D  \label{eqn:k2jinD}
     \end{align} 

Note the substitution requires swapping $k$ and $j$ in
clause \labelcref{eqn:dh1}.

From the above two conditions, applying the inductive
hypthothesis, we have

    \begin{align}
      \vec{d_k}(n) &=\true\\
      \vec{d_j}(m) &=\false\\
      a_k &=\e
    \end{align}

$\vec{d'_j}(m)$ may  now be computed as follows:
    \begin{align*}
      \vec{d'_j}(m) &=  d_L(\vec{d_j}(m), \vec{a_j}(0), \vec{a_j}(m))\\
                    &=  d_L(\false, a_j, a_k) \\
                    &=  d_L(\false, a_j, \e)
    \end{align*}

An inspection of the definition of $d_L$ reveals three cases
that have $\e$ in the third argument:
\labelcref{eqn:dLdef-te}, \labelcref{eqn:dLdef-he}, and
\labelcref{eqn:dLdef-ee}.  Of these, the last case is ruled
out because of the safety property of the centralised
solution; no two adjacent vertices eat at the same time.
For each of the other two cases, $\vec{d'_j}(m)$ yields
$\true$.

Computing $\vec{d'_k}(n)$, 
    \begin{align*}
      \vec{d'_k}(n) &=  d_L(\vec{d_k}(n), \vec{a_k}(0), \vec{a_k}(m))\\
                    &=  d_L(\true, a_k, a_j) \\
                    &=  d_L(\true, \e, a_j)
    \end{align*}

A similar examination of cases \labelcref{eqn:dLdef-et},
\labelcref{eqn:dLdef-eh} allows us to conclude that
$\vec{d'_k}(n)=\false$.

\item Case 2 (clause \labelcref{eqn:dh2} of $d_H$):
  \begin{align}
   \oa(k)=\t\\
   \oa(j)=\h\\
   k\mapsto j\in D
  \end{align}

Note again that the substitution requires swapping $k$ and
$j$, this time in clause \labelcref{eqn:dh2}.

From the above conditions, applying the induction
hypothesis,
 \begin{align}
   \vec{d_k}(n)=\true\\
   \vec{d_j}(m)=\false\\
   a_k=\t\\
   a_j=\h
 \end{align}

$\vec{d'_j}(m)$ may  now be computed as follows:
    \begin{align*}
      \vec{d'_j}(m) &=  d_L(\vec{d_j}(m), \vec{a_j}(0), \vec{a_j}(m))\\
                    &=  d_L(\false, a_j, a_k) \\
                    &=  d_L(\false, \h, \t)\\
                    &=  \true
    \end{align*}

Computing $\vec{d'_k}(n)$, 
    \begin{align*}
      \vec{d'_k}(n) &=  d_L(\vec{d_k}(n), \vec{a_k}(0), \vec{a_k}(m))\\
                    &=  d_L(\true, a_k, a_j) \\
                    &=  d_L(\true, \t,  \h)\\
                    &=  \false
    \end{align*}

\item Case 3:  From clause \labelcref{eqn:dh3} of $d_H$, it
  follows that

\begin{align}
 j\mapsto k\in D\\
 \oa(j)\neq \e\\ 
 \neg (\oa(j)=\t\ \mbox{and}\ \oa(k)=\h)
\end{align}

By the induction hypothesis
 \begin{align}
   \vec{d_j}(m)=\true\\
   \vec{d_k}(n)=\false\\
   a_j\neq \e\\ 
 \neg (a_j=\t\ \mbox{and}\ a_k=\h)
 \end{align} 

$\vec{d'_j}(m)$ may now be computed as follows:
    \begin{align*}
      \vec{d'_j}(m) &=  d_L(\vec{d_j}(m), \vec{a_j}(0), \vec{a_j}(m))\\
                    &=  d_L(\true, a_j, a_k) \\
    \end{align*}

The conditions on $a_j$ and $a_k$ eliminate the
possibilities \labelcref{eqn:dLdef-th},
\labelcref{eqn:dLdef-eh}, \labelcref{eqn:dLdef-et}, 
and \labelcref{eqn:dLdef-ee} in the definition of $d_L$.  Of the
remaining five cases, three
cases \labelcref{eqn:dLdef-te}, \labelcref{eqn:dLdef-he} and
\labelcref{eqn:dLdef-ht} yield the value $\true$, while the
remaining two cases (\labelcref{eqn:dLdef-tt}
and \labelcref{eqn:dLdef-hh}) yield the value $\vec{d_j}(m)$,
which by induction hypothesis, is also $\true$.

$\vec{d'_k}(n)$ may now be computed as follows:
    \begin{align*}
      \vec{d'_k}(n) &=  d_L(\vec{d_k}(n), \vec{a_k}(0), \vec{a_k}(m))\\
                    &=  d_L(\false, a_k, a_j)\\
    \end{align*}
Again, the condition $a_j\neq \e$ eliminates the
possibilities \labelcref{eqn:dLdef-te},
\labelcref{eqn:dLdef-he}, \labelcref{eqn:dLdef-ee} and
\labelcref{eqn:dLdef-ee} in the definition of $d_L$.  Of the
remaining six cases, the impossibility of the condition
$a_k=\h$ and $a_j=\t$, eliminates the case
\labelcref{eqn:dLdef-ht}.  

Of the remaining five cases, three of them,
\labelcref{eqn:dLdef-th}, \labelcref{eqn:dLdef-eh} and
\labelcref{eqn:dLdef-et} yield the value $\false$, while the
remaining two cases (\labelcref{eqn:dLdef-tt} and
\labelcref{eqn:dLdef-hh}) yield the value $\vec{d_k}(n)$,
which by induction hypothesis, is also $\false$.
\ee

The next thing to prove is the claim $\oc(j)= c_j$ for all
computations.  The proof is by induction: Verify that
$\oc^0(j)=c^0_j$ and $\oc(j)=c_j$ implies $\oc'(j)=c'_j$.

The proof proceeds by first showing that

\begin{align}
j\in \id{top}(D)\     &\mbox{iff}\  \id{top}_L(\vec{d_j})=\true\\
\id{ready}(D,\oa)(j)\ &\mbox{iff }  \id{ready}_L(\vec{d_j},\vec{a_j})=\true\\
g_H(D,\oa)(j)         &=  g_L(\vec{d_j},\vec{a_j})
\end{align}

The proofs of each of these are straightforward and
omitted.   

Finally, in both the centralised and distributed architectures,
the philosopher system instances are identical and hence
they have the same dynamics, they both operate with
identical initial conditions ($\oa(j)=a_j=\t$) and in each
case the choice inputs are identical and the control inputs,
which are outputs of the respective controllers, are
identical as well ($\oc(j) = c_j$ as proved above).  From
this, it follows that $\oa(j)=a_j$ for all computations.

\end{proof}

This concludes our formal analysis of the Generalised N
Diners problem and its solution for centralised and
distributed scenarios.   

\section{Related Work}
\label{sec:related}
This section is in two parts: the first is a detailed
comparison with Chandy and Misra's solution, the second is a
survey of several other approaches.

\subsection{Comparison with Chandy and Misra solution}
\label{subsec:cm-soln}
Chandy and Misra\cite{chandy-misra-book-1988} provides the
original statement and solution to the Generalised Dining
Philosophers problem.  There are several important points of
comparison with their problem formulation and solution.

The first point of comparison is architecture: in brief,
shared variables vs. modular interconnects.  Chandy and
Misra's formulation of the problem identifies the division
between a \emph{user} program, which holds the state of the
philosophers, and the \emph{os}, which runs concurrently
with the user and modifies variables shared with the
\emph{user}.  Our formulation is based on formally defining
the two main entities, the philosopher and the controller,
as formal systems with clearly delineated boundaries and
modular interactions between them.  The idea of feedback
control is explicit in the architecture, not in the shared
variable approach. 

Another advantage of the modular architecture that our
solution affords is apparent when we move from the
centralised solution to the distributed solution.  In both
cases, the definition of the philosopher remains exactly the
same; additional interaction is achieved by wiring a local
controller to each philosopher rather than a central
controller.  We make a reasonable assumption that the output
of a philosopher is readable by its neighbours.  In Chandy
and Misra's solution, the distributed solution relies on
three shared boolean state variables per edge in the
\emph{user}: a boolean variable \emph{fork} that resides
with exactly one of the neighbours, its status \emph{clean}
or \emph{dirty}, and a \emph{request token} that resides with
exactly one neighbour, adding up to $3|E|$ boolean
variables.  These variables are not distributed; they reside
with the \emph{os}, which still assumes the role of a
central controller.  In our solution, the distribution of
philosopher's \emph{and} their control is evident.
Variables are distributed across the vertices: each vertex
$j$ with degree $j$ has $\alpha(j)+1$ input ports of type
$\Act$ that read the neighbours' plus self's activity
status.  In addition, each local controller has, as a
boolean vector $\vec{d}_j$ of length $\alpha(j)$ as part of
its internal state, that keeps information about the
direction of each edge with $j$ as an endpoint.  A pleasant
and useful property of this approach is that the centralised
data structure $D$ may be reconstructed by the union of
local data structures $\vec{d}$ at each vertex.

The second point of comparison is the algorithm and its
impact on reasoning.  Both approaches rely on maintaining
the dominance graph $D$ as a partial order.  As a
result, in both approaches, if $j$ is hungry and has
priority over $k$, then $j$ eats before $k$.  In Chandy and
Misra's algorithm, however, $D$ is updated only when a
hungry vertex transits to eating to ensure that eating
vertices are sinks.  In our solution, $D$ is updated to
satisfy an additional condition that hungry vertices always
dominate thinking vertices.  This ensures two elegant
properties of our algorithm, neither of which are true in
Chandy and Misra: (a) a top vertex is also a maximal element
of the partial order $D$, (b) a hungry vertex that is at the
top remains so until it is ready, after which it starts
eating.  In Chandy and Misra's algorithm, a vertex is at the
top if it dominates only (all of its) hungry neighbours; it
could still be dominated by a thinking neighbour.  It is
possible that a hungry top vertex is no longer at the top if
a neighbouring thinking vertex becomes hungry
(\cref{tbl:top-hungry-non-monotonic}).  This leads us to the
third property that is true in our approach but not in
Chandy and Misra's:  amongst two thinking neighbours $j$ and
$k$, whichever gets hungry first gets to eat first.

\begin{table}
  \caption{Example demonstrating two properties of Chandy
    and Misra's algorithm: (a) a top hungry vertex no longer
    remains top, and (b) In step 3, Vertex 1, which was at
    the top, is hungry, but no longer at the
    top. \label{tbl:top-hungry-non-monotonic}}
\begin{center}
\begin{tabular}{rllll}
\hline
i & G & D & top & remarks\\
\hline
0 & $\set{1:\t, 2:\t, 3:\t}$ & $\set{2\mapsto 1, 3\mapsto 1}$ & $\set{2,3}$ & initial\\
\hline
1 & $\set{1:\h, 2:\t, 3:\h}$ & ditto & $\set{2,3}$ & $3$ at top\\
\hline
2 & $\set{1:\h, 2:\t, 3:\e}$ & $\set{2\mapsto 1, 1\mapsto 3}$ & $\set{1,2}$ & $1$ is at the top\\
\hline
3 & $\set{1:\h, 2:\h, 3:\e}$ & ditto & $\set{2}$ & $2$ is at
the top, not $1$\\
\hline
\end{tabular}
\end{center}
\end{table}
%\end{comment}

\subsection{Comparison with other related work}

%% Starting from
%% Dijkstra's \cite{dijkstra-ewd310,dijkstra-1971} of the
%% Dining Philosopher's problem, Later, Hoare christened the
%% problem as ``Dining Philosophers.''  The Generalised N
%% Diners problem was formulated and solved by Chandy and
%% Misra\cite{Chandy:1984:DPP:1780.1804,chandy-misra-book-1988}.
%% Another generalisation was suggested and solved by Sidhu and
%% Pollack\cite{sidhu-pollack-1984} using graph coloring.

%% The main considerations in solving the dining philosophers
%% problem 

%% \be
%%  \item Original Five philosopher problem 
%%  \item Consider a generalized problem
%%  \item Ensure fairness by ordering the resources (forks). 
%%  \item Ensure fairness by ordering the philosophers.  
%%  \item Allow for some philosophers to die.
%%  \item Modularity
%%  \item Addition of new philosophers dynamically
%%  \item Probabilistic approaches
%%  \item Supervisory Control 
%%  \item Synchronous reactive languages
%%  \item Allowing recovery from an unsafe configuration
%% \ee

Literature on the Dining Philosophers problem is vast.  Our
very brief survey is slanted towards approaches that ---
explicitly or implicitly --- address the modularity and
control aspects of the problem and its
solution.  \cite{Papatriantafilou97ondistributed} surveys
the effectiveness of different solutions against various
complexity metrics like response time and communication
complexity.  Here, we leave out complexity theoretic
considerations and works that explore probabilistic and many
other variants of the problem.

\subsection{Early works}
Dijkstra's Dining Philosophers problem was formulated for
the five philosophers seated in a circle.  Dijsktra later
generalized it to N philosophers.  Lynch\cite{lynch-1981}
generalised the problem to a graph consisting of an
arbitrary number of philosophers connected via edges
depicting resource sharing constraints.  Lynch also
introduced the notion of an interface description of systems
captured via external behaviour, i.e., execution sequences
of automata.  This idea was popularized by Ramadge and
Wonham\cite{21072} who advocated that behaviour be specified
in terms of language-theoretic properties.  They also
introduce the idea of control to affect behaviour.

Chandy and
Misra\cite{Chandy:1984:DPP:1780.1804,chandy-misra-book-1988}
propose the idea of a dynamic acyclic graph via edge
reversals to solve the problem of fair resolution of
contention, which ensures progress.  This is done by
maintaining an ordering on philosophers contending for a
resource.  The approach's usefulness and generality is
demonstrated by their introduction of the Drinking
Philosophers problem as a generalisation of the Dining
Philosophers problem.  In the Drinking Philosophers problem,
each philosopher is allowed to possess a subset of a set of
resources (drinks) and two adjacent philosophers are allowed
to drink at the same time as long as they drink from
different bottles.  Welch and
Lynch\cite{welch-lynch-1993,lynch-book-1996} present a
modular approach to the Dining and Drinking Philosopher
problems by abstracting the Dining Philosophers system as an
I/O automaton.  Their paper, however, does not invoke the
notion of control.  Rhee\cite{500015} considers a variety of
resource allocation problems, include dining philosophers
with modularity and the ability to use arbitrary resource
allocation algorithms as subroutines as a means to compare
the efficiency of different solutions.  In this approach,
resource managers are attached to each resource, which is
similar in spirit to the local controllers idea.

\subsection{Other approaches}
Sidhu~et~al.\cite{sidhu-pollack-1984} discuss a distributed
solution to a generalised version of the dining philosophers
problem.  By putting additional constraints and modifying
the problem, like the fixed order in which a philosopher can
occupy the forks available to him and the fixed number of
forks he needs to occupy to start eating, they show that the solution is
deadlock free and robust. The deadlock-free condition is
assured by showing that the death of any philosopher
possessing a few forks does not lead to the failure of the
whole network, but instead disables the functioning of only
a finite number of philosophers.  In this paper, the
philosophers require multiple (>2) forks to start eating,
and the whole solution is based on forks and their
constraints. Also, this paper discusses the additional
possibility of the philosophers dying when in possession of
a few forks, which is not there in our paper.

Weidman et al.\cite{Weidman1991} discuss an algorithm
for the distributed dynamic resource allocation problem,
which is based on the solution to the dining philosophers
problem. Their version of the dining philosophers problem is
dynamic in nature, in that the philosophers are allowed to
add and delete themselves from the group of philosophers who
are thinking or eating.  They can also add and delete
resources from their resource requirements. The state space
is modified based on the new actions added: adding/deleting
self, or adding/deleting a resource.  The main difference
from our solution is the extra option available to the
philosophers to add/delete themselves from the group of
philosophers, as well as add/delete the resources available
to them.  The state space available to the philosophers is
also expanded because of those extra options - there are
total 7 states possible now - whereas our solution allows
only 3 possible states (thinking, hungry and eating). Also,
the notion of a 'controller' is absent here - the
philosophers' state changes happen depending on the
neighbours and the resources availability, but there is no
single controller which decides it.

Zhan et al.\cite{Zhan2012} propose a mathematical model for
solving the original version of the dining philosophers
problem by modeling the possession of the chopsticks by the
philosophers as an adjacency matrix.  They talk about the
various states of the matrix which can result in a deadlock,
and a solution is designed in Java using semaphores which is
proven to be deadlock free, and is claimed to be highly
efficient in terms of resource usability.

Awerbuch et al.\cite{Awerbuch1990} propose a deterministic
solution to the dining philosophers problem that is based on
the idea of a "distributed queue", which is used to ensure
the safety property. The collection of philosophers operate
in an asynchronous message-driven environment. They heavily
focus on optimizing the "response time" of the system to
each job (in other words, the philosopher) to make it
polynomial in nature. In our solution, we do not talk about
the response time and instead we focus on the modularity of
the solution, which is not considered in this solution.

A distributed algorithm for the dining philosophers
algorithm has been implemented by Haiyan\cite{Haiyan1999} in
Agda, a proof checker based on Martin-Lof's type theory.  A
precedence graph is maintained in this solution where
directed edges represent precedences between pairs of
potentially conflicting philosophers, which is the same idea
as the priority graph we have in our solution.  But unlike
our solution, they also have chopsticks modelled as part of
the solution in Agda.

Hoover et al.\cite{Hoover1992} describe a fully distributed
self-stabilizing\footnote{Regardless of the initial state,
the algorithm eventually converges to a legal state, and
will therefore remain only in legal states.} solution to the
dining philosophers problem.  An interleaved semantics is
assumed where only one philosopher at a time changes its
state, like the asynchronous dynamics in our solution.  They
use a token based system, where tokens keeps circling the
ring of philosophers, and the possession of a token enables
the philosopher to eat.  The algorithm begins with a
potentially illegal state with multiple tokens, and later
converges to a legal state with just one token.  Our
solution do not have this self-stabilization property, as we
do not have any "illegal" state in our system at any point
of time.

The dining philosophers solution mentioned in the work by
Keane et al.\cite{Keane2001} uses a generic graph model
like the generalized problem: edges between processes which
can conflict in critical section access.  Modification of
arrows between the nodes happens during entry and exit from
the critical section.  They do not focus on aspects like
modularity or equational reasoning, but on solving a new
synchronization problem (called GRASP).

Cargill\cite{Cargill1982} proposes a solution which is
distributed in the sense that synchronization and
communication is limited to the immediate neighbourhood of
each philosopher without a central mechanism, and is robust
in the sense that the failure of a philosopher only affects
its immediate neighbourhood.  Unlike our solution, forks are
modelled as part of their solution.

You et al.\cite{You2010} solve the Distributed Dining
Philosophers problem, which is the same as the Generalized
Dining Philosophers problem, using category theory.  The
phases of philosophers, priority of philosophers,
state-transitions etc. are modelled as different categories
and semantics of the problem are explained.  They also make
use the graph representation of the priorities we have used
in our paper.

Nesterenko et al.\cite{Nesterenko2002} present a solution
to the dining philosophers problem that tolerates malicious
crashes, where the failed process behaves arbitrarily and
ceases all operations.  They talk about the use of
stabilization - which allows the program to recover from an
arbitrary state - and crash failure locality - which ensures
that the crash of a process affects only a finite other
processes - in the optimality of their solution.  

Chang\cite{Chang1980} in his solution tries to decentralise
Dijkstra's solution to the dining philosophers problem by
making use of message passing and access tokens in a
distributed system.  The solution does not use any global
variables, and there is no notion of 'controllers' in the
solution like we have in ours.  Forks are made use of in the
solution.

Datta et al.\cite{Datta2005} considers the mobile
philosophers problem in which a dynamic network exists
where both philosophers and resources can join/leave
the network at any time, and the philosophers can
connect/disconnect to/from any point in the network.  The
philosopher is allowed to move around a ring of resources,
making requests to the resources in the process.  The
solution they propose is self-stabilizing and asynchronous.

\subsection{Supervisory control}
  The idea of using feedback (or supervisory)
control to solve the Dining Philosophers program is not new.
Miremadi~et~al.\cite{Miremadi-et-al-2008} demonstrate how to
automatically synthesise a supervisory controller using
Binary Decision Diagrams.  Their paper uses Hoare
composition but does not describe the synthesised
controller, nor do they attempt to prove why their solution
is correct.  Andova~et~al.\cite{andova-et-al-2012} use the
idea of a central controller delegating part of its control
to local controllers to solve the problem of
self-stabilization: i.e., migrating a deadlock-prone
configuration to one that is deadlock-free using distributed
adaptation.

Similar to our solution, Vaughan\cite{Vaughan1992} presents
centralised and distributed solutions to the dining
philosophers problem.  The centralised solution does not
have a hub controller, but has monitor data structures,
which store information like the number of chopsticks
available to each philosopher, the claims made by a
philosopher on his adjacent chopsticks, etc.  In his
distributed solution, the chopsticks are viewed as static
resources and there are manager processes, like we have
controllers, to control them. But unlike our solution, the
local manager processes only control the chopsticks (with
the help of a distributed queue to sequentialize access to
the chopsticks for the philosophers) and not the
philosophers, and the access to the resources is scheduled
by the philosophers by passing messages between themselves.

Siahaan\cite{Siahaan2015}, in his solution, proposes a
framework containing an active object called 'Table' which
controls the forks and the state transitions of the
philosophers.  The other active objects in the framework are
the philosophers and the timer controller (which issues
timeout instructions to the philosophers to change state).
The table manages the state-change requests of the
philosophers depending on the state of forks, hence serving
a purpose similar to the controllers in our solution.  The
timer object sends instructions to the philosophers for
state change, but our paper does not involve a timer to do
so.

Feedback control has been used to solve other problems too.
Wang et al.\cite{wang2008} model discrete event systems
using Petri nets and synthesise feedback controllers for
them to avoid deadlocks in concurrent software.  Mizoguchi
et al.\cite{mizoguchi2016} design a feedback controller of a
cyber-physical system by composing several abstract systems,
and prove that the controlled system exhibits the desired
behaviour.  Fu et al.\cite{fu2014} model adaptive control for
finite-state transition systems using elements from
grammatical inference and game theory, to produce
controllers that guarantee that a system satisfies its
specifications.

\subsection{Synchronous languages}
Synchronous languages like Esterel, SIGNAL and
Lustre\cite{halbwachs-book-1993} are popular in the embedded
systems domain because synchronicity allows simpler
reasoning with time.  Gamatie\cite{gamatie2009designing}
discusses the N Dining Philosophers problem with the
philosophers seated in a ring.  The example is presented in
the programming language SIGNAL, whose execution model uses
synchronous message passing.  The SIGNAL programming
language also compiles the specifications to C code.  The
solution uses three sets of processes: one for the
philosophers, one for the forks, and one for the main
process used for coordination.  Communication between the
philosophers and the forks happens via signals that are
clocked.  In this respect, the solution is similar to the
one described in this paper.  However, in the solution, each
signal has its own clock (polysynchrony), all derived from a
single master clock.

%\begin{table}
%\caption{Summary of work related to the Dining Philosophers
%problem.\label{tbl:related-work}}
%%%%% Begin: lines  generated from approach-bib.org
%\input{approach-bib-stub}
%%%%% End: lines  generated from approach-bib.org
%\end{table}

\section{Conclusion and Future Work}
\label{sec:conc}
This work has three objectives: first, to apply the
idea of feedback control to problems of concurrency; second,
to systematically apply the notion of Tabuada systems and
composition when constructing the problem statement and its
solution, and third, to ensure that the solution is as
modular as possible.  The additional notion that we have had
to rely on is the notion of a global clock for synchronous
dynamics, which has considerably simplified the analysis and
proofs.  In the process, we have also come up with a
different solution, one which reveals how the distributed
solution is a distribution of the state in the centralised
solution.

The solution to Dining Philosophers using this approach
leads us to believe that this is a promising direction to
explore in the future, the formalisation of software
architectures for other sequential and concurrent systems.

\bibliography{ref} 
\bibliographystyle{acm}

\appendix

\end{document}